\newcommand{\tl}{\widetilde{\tau}}
\newcommand{\ignore}[1]{}
\newtheorem{dummy}{Dummy}
\newtheorem{lemma}[dummy]{Lemma}
\newtheorem{theorem}[dummy]{Theorem}
\newtheorem{proposition}[dummy]{Proposition}
\newtheorem{corollary}[dummy]{Corollary}
\theoremstyle{definition}
\newtheorem{definition}{Definition}
\newtheorem{example}[dummy]{Example}
\newtheorem{remark}[dummy]{Remark}
\title[Nonassociative algebras used to build space-time block codes]
{The nonassociative algebras used to build fast-decodable space-time block codes}
\author[Susanne Pumpl\"un]{}
\author[A. Steele]{}
\subjclass{Primary: 17A35, 94B05.}
 \keywords{Space-time block codes, fast-decodable, MIMO code, nonassociative algebra, division algebra.}
\author{S. Pumpl\"un}
\author{A. Steele}
\email{susanne.pumpluen@nottingham.ac.uk;
pmxas4@nottingham.ac.uk}
\address{School of Mathematical Sciences\\
University of Nottingham\\
University Park\\
Nottingham NG7 2RD\\
United Kingdom
}
\address{Flat 203, Wilson Tower\\
16 Christian Street\\
London E1 1AW\\
United Kingdom
}
\begin{document}
\maketitle

\begin{abstract}
Let $K/F$ and $K/L$ be two  cyclic Galois field extensions  and $D=(K/F,\sigma,c)$ a cyclic algebra. Given an
invertible element $d\in D$,
we present three families of unital nonassociative algebras over $L\cap F$ defined
on the direct sum of $n$ copies of
 $D$.
Two of these families appear either explicitly or implicitly in the designs of fast-decodable space-time block codes
in  papers
by Srinath, Rajan, Markin, Oggier, and the authors.
We present conditions for the algebras to be division and propose a construction for fully diverse
 fast decodable space-time block codes of rate-$m$ for $nm$ transmit and $m$ receive antennas.
We present a  DMT-optimal rate-3 code for 6 transmit and 3 receive antennas which is fast-decodable,
with ML-decoding complexity
at most $\mathcal{O}(M^{15})$.
\end{abstract}

\maketitle

\section{Introduction}

Space-time block codes (STBCs) are used for reliable high rate transmission
over wireless digital channels with multiple antennas at both the transmitter and receiver ends.
From the mathematical point of view, a space-time block code is a set of complex $n\times m$ matrices,
the codebook, that satisfies a number of properties which determine how well the code performs.

Recently, several different constructions of nonassociative algebras appeared in the literature on fast decodable
STBCs, cf. for instance Markin and Oggier \cite{MO13}, Srinath and Rajan \cite{R13}, or
 \cite{SP14}, \cite{P13.2}, \cite{SPO12}, \cite{PU11}.
 There are two different types of algebras involved.
 The aim of this paper is to present them in a unified manner and
 investigate their structure, in order to be able to build the
  associated (fully diverse, fast-decodable) codes more efficiently in the future.

Let $K/L$ be a cyclic Galois field extension with Galois group ${\rm Gal}(K/L)=\langle \tau\rangle$ of degree $n$
and $K/F$ a cyclic Galois field extension with Galois group ${\rm Gal}(K/F)=\langle \sigma\rangle$ of degree $m$.
Put $F_0=F\cap L$.
Given the direct sum $A$ of $n$ copies of a cyclic  algebra $D=(K/F,\sigma,c)$, $c\in F_0$, we define three
different multiplications on $A$, which each turn $A$ into a unital nonassociative algebra over $F_0$.
We canonically extend $\tau$ to an $L$-linear map $\widetilde{\tau}:D\to D$,
 choose an element $d\in D^\times$ and define a multiplication on the right $D$-module
\[ D \oplus fD \oplus f^2D \oplus \cdots \oplus f^{n-1}D\]
 via
\[
 (f^i x)(f^j y) =
  \begin{cases}
   f^{i+j} \tl^j(x)y & \text{if } i+j < n, \\
   f^{(i+j)-n} d \tl^j(x)y & \text{if } i+j \geq n,
  \end{cases}
\]
\[
 (f^i x)(f^j y) =
  \begin{cases}
   f^{i+j} \tl^j(x)y & \text{if } i+j < n, \\
   f^{(i+j)-n}  \tl^j(x) d y  & \text{if } i+j \geq n,
  \end{cases}
\]
or
\[
 (f^i x)(f^j y) =
  \begin{cases}
   f^{i+j} \tl^j(x)y & \text{if } i+j < n, \\
   f^{(i+j)-n} \tl^j(x)yd  & \text{if } i+j \geq n
  \end{cases}
\]
for all $x, y \in D$, $0\leq i,j<n$. We call the resulting algebra ${\rm It}^n(D,\tau,d)$, ${\rm It}_M^n(D,\tau,d)$ or
 $It_R^n(D, \tau, d) $, respectively.

For $A={\rm It}^n(D,\tau,d)$ and $A={\rm It}^n_M(D,\tau,d)$, the left multiplication $L_x$ with a non-zero
element $x\in A$ can be represented by
an $ nm\times nm$ matrix with entries in $K$
 (considering $A$ as a right $K$-vector space of dimension $mn$).

 For $d\in L^\times$,  left multiplication
 $L_x$ with a non-zero element $x\in A=It_R^n(D,\tau,d)$ is a $K$-endomorphism as well, and can be represented
by an $nm\times nm$ matrix with entries in $K$.

 The family of matrices representing left multiplication in any of the three cases can be used to define
a STBC $\mathcal{C}$, which is fully diverse if and only if $A$ is division, and
 fast-decodable for the right choice of $D$.

 The three algebra constructions in this paper generalize the three types of {\it iterated algebras} presented in
\cite{P13.2} (the $n=2$ case).
 A first question concerning their existence
 can be found in Section VI. of  \cite{MO13}; the iterated codes treated there arise from the algebra
 ${\rm It}^2(D,\tau,d)$.
 The algebras ${\rm It}^n(D,\tau,d)$ and ${\rm It}^n_R(D,\tau,d)$
 appear when  designing fast-decodable asymmetric multiple input double output (MIDO) codes:
 ${\rm It}^n_R(D,\tau,d)$ is implicitly used in \cite{R13} but not mentioned there,
the algebras ${\rm It}^n(D,\tau,d)$ are canonical generalizations of the ones behind the iterated codes of  \cite{MO13},
and are employed in \cite{SP14}. Both times they
 are used to design fast decodable rate-2 MIDO
space-time block codes with $n$ antennas transmitting and $2$ antennas receiving the data.
All  codes for $n>2$ transmit antennas presented in
\cite{R13} and all but one \cite{SP14} have sparse entries and therefore do not have a high data rate.

We include the third family, ${\rm It}^n_M(D,\tau,d)$, for completeness.

After the preliminaries in Section 2,
 the algebras ${\rm It}^n(D,\tau,d)$ and ${\rm It}_M^n(D,\tau,d)$ are investigated in Section 3.
 Several necessary and sufficient conditions for
 ${\rm It}^n(D,\tau,d)$ to be a division algebra are given if $d\in F^\times$. For instance,
if $n$ is prime and in case $n\not=2,3$, additionally $F_0$ contains a primitive $n$th root of unity, then
 ${\rm It}^n(D,\tau,d)$ is a division algebra  for all $d\in F\setminus F_0$ with
$d^m\not\in F_0$ (Proposition \ref{prop:6}).
 Section 4 deals with
the algebras ${\rm It}_R^n(D,\tau,d)$ which were defined by B. S. Rajan and L. P. Natarajan
(and for $d\in L\setminus F$ yield the codes  in~\cite{R13}).
They were already
defined previously in a little known paper by Petit \cite{P66} using twisted polynomial rings.
Necessary and sufficient conditions for ${\rm It}_R^n(D,\tau,d)$ to be a division algebra are given and simplified for
special cases.
E.g., if $D$ is a quaternion division algebra, ${\rm It}_R^3(D,\tau,d)$ is a division algebra
for all $d\in L\setminus F$ with $d\not\in N_{K/L}(K^\times)$ (Theorem \ref{thm:degree3}).

Some of these conditions are simplification of the ones contained in an earlier version of
 this paper, applied in \cite{SP14} when designing fully diverse codes. In particular for the case $n=3$,
 Proposition \ref{prop:6} makes it easy now to build fully diverse codes of maximal rate using ${\rm It}^3(D,\tau,d)$
 and Theorem \ref{thm:degree3} using ${\rm It}_R^3(D,\tau,d)$.
 Previously, there were no criteria known to check such iterated $6\times 3$-codes for full inversibility.

How to design fully diverse fast-decodable  multiple input multiple output (MIMO) codes
for $nm$ transmit and $m$ receive antennas employing certain
${\rm It}_R^n(D,\tau,d)$ and ${\rm It}^n(D,\tau,d)$ is explained in Sections 5 and 6:  if the code
associated to $D$ is fast-decodable, then so is the one associated to ${\rm It}_R^n(D,\tau,d)$, respectively,
${\rm It}^n(D,\tau,d)$.
We are interested in a high data rate and use the $mn^2$ degrees of freedom of the channel to transmit
$mn^2$ complex symbols.
 Our method yields fully diverse
codes of rate-$m$ for $nm$ transmit and $m$ receive antennas, which is maximal rate for $m$ receive antennas.
We present two examples of a rate-3 code for 6 transmit and 3 receive antennas which are fast-decodable with
 ML-decoding complexity at most $\mathcal{O}(M^{15})$ (using the M-HEX constellation).
One of them is DMT-optimal
and has normalized minimum determinant $49(\frac{2}{\sqrt{28E}})^{18}=1/7^7 E^9$.
We also give an example of a rate-4 code for 8 transmit and 4 receive antennas which is fast-decodable with
ML-decoding complexity
at most $\mathcal{O}(M^{26})$ (using the M-QAM constellation).
The suggested codes  have maximal rate in terms of the number of complex symbols per channel use (cspcu).

\section{Preliminaries}

\subsection{Nonassociative algebras}

Let $F$ be a field. By ``$F$-algebra'' we mean a finite dimensional nonassociative algebra over $F$ with unit element $1$.

 A nonassociative algebra $A\not=0$ is called a {\it division algebra} if for any $a\in A$, $a\not=0$,
the left multiplication  with $a$, $L_a(x)=ax$,  and the right multiplication with $a$, $R_a(x)=xa$, are bijective.
$A$ is a division algebra if and only if $A$ has no zero divisors \cite[pp. 15, 16]{Sch2}.

For an $F$-algebra $A$, associativity in $A$ is measured by the {\it associator} $[x, y, z] = (xy) z - x (yz)$.
The
{\it middle nucleus} of $A$ is
defined as ${\rm Nuc}_m(A) = \{ x \in A \, \vert \, [A, x, A]  = 0 \}$ and  the
{\it  nucleus} of $A$ is
defined as  ${\rm Nuc}(A) = \{ x \in A \, \vert \, [x, A, A] = [A, x, A] = [A,A, x] = 0 \}$.
The nucleus is an associative
subalgebra of $A$ containing $F1$ and $x(yz) = (xy) z$ whenever one of the elements $x, y, z$ is in
${\rm Nuc}(A)$.
The {\it commuter} of $A$ is defined as ${\rm Comm}(A)=\{x\in A\,|\,xy=yx \text{ for all }y\in A\}$
and the {\it center} of $A$ is ${\rm C}(A)=\{x\in A\,|\, x\in \text{Nuc}(A) \text{ and }xy=yx \text{ for all }y\in A\}$.

For coding purposes, often algebras are considered as a vector space over some
 subfield $K$, $F\subset K\subset A$.
  Usually $K$ is maximal with respect to inclusion. For nonassociative algebras, this is for instance
   possible if $K \subset \text{Nuc}(A) $.

  If then left multiplication $L_x$ is a $K$-linear map
for an algebra $A$ over $F$ we can consider the map
$$\lambda: A \to {\rm End}_K(A), x\mapsto L_x$$
which induces a map
$$\lambda: A \to {\rm Mat}_s(K), x\mapsto L_x \mapsto \lambda(x)= X$$
with $s=[A: K]$,
after choosing a $K$-basis for $A$ and expressing the endomorphism $L_x$ in matrix form.
For an associative algebra, this is the left regular representation of $A$.

If $A$ is a division algebra, $\lambda$ is an embedding of vector spaces.

Similarly, given an associative subalgebra $D$ of $A$ such that $A$ is a free right $D$-module and
such that  left multiplication $L_x$ is a right $D$-module endomorphism,
we can consider the map
$$\lambda: A \to {\rm End}_D(A), x\mapsto L_x$$
which induces a map
$$\lambda: A \to {\rm Mat}_t(D), x\mapsto L_x \mapsto \lambda(x)=X$$
with $t={\rm dim}_D A$,
after choosing a $D$-basis for $A$.

\subsection{Associative and nonassociative cyclic algebras}

Let $K/F$ be a cyclic Galois extension of degree $m$, with Galois group ${\rm Gal}(K/F)=\langle \sigma \rangle$.

Let $c\in F^\times$.
An {\it associative cyclic algebra} $A=(K/F,\sigma,c)$ {\it of degree} $m$ over $F$ is an $m$-dimensional $K$-vector space
$
A=K \oplus eK \oplus e^2 K\oplus\dots \oplus e^{m-1}K,
$
with multiplication given by the relations
$$\label{eq:rule}
e^m=c,~xe=e\sigma(x),
$$
for all $x\in K$.  If  $c^s \neq N_{K/L}(x)$ for all $x \in K$ and all $1 \leq s \leq m-1$, then $A$ is a division algebra.

For any $c\in K\backslash F$, the {\it nonassociative cyclic algebra}  $A=(K/F,\sigma,c)$
{\it of degree} $m$ is
given by the $m$-dimensional $K$-vector space
$
A=K \oplus eK \oplus e^2K \oplus \dots\oplus e^{m-1}K
$
together with the rules
\[
 (e^i x)(e^j y) =
  \begin{cases}
   e^{i+j} \sigma^j(x)y & \text{if } i+j < m \\
   e^{(i+j)-m} c \sigma^j(x)y & \text{if } i+j \geq m
  \end{cases}
\]
for all $x,y\in K, 0 \leq i,j, <m$, which are extended linearly to all elements of $A$ to define the multiplication of $A$.

The unital algebra $(K/F,\sigma,c)$, $c\in K\setminus F$  is not $(n+1)$st power associative, but is built similar to the
 associative cyclic algebra $(K/F,\sigma,c)$, where $c\in F^\times$:
 we again have
 $$xe=e\sigma(x) \text{ and } e^ie^j=c$$
for all integers $i,j$ such that $i+j=m$, so that $e^m$ is well-defined and $e^m=c.$
$(K/F,\sigma,c)$ has nucleus  $K$  and center $F$.
 If $c \in K \setminus F$  is such that $1, c, c^2, \ldots, c^{m-1}$ are linearly independent over $F$, then
  $A$ is a division algebra. In particular, if $m$ is prime, then $A$ is division for any choice of $c \in K \setminus F$.
Nonassociative cyclic algebras are studied extensively in \cite{S12}.

\subsection{Iterated algebras \cite{P13.2} }

Let $K/F$ be a cyclic Galois extension of degree $m$ with Galois group
${\rm Gal}(K/F)=\langle\sigma\rangle$ and $\tau\in {\rm Aut} (K)$.
Define $L={\rm Fix}(\tau)$ and $F_0=L\cap F$.
Let $D=(K/F,\sigma,c)$ be an associative cyclic algebra over $F$ of degree $m$.
 For $x= x_0 + ex_1 + e^2x_2 +\dots + e^{m-1}x_{m-1}\in D$, define the $L$-linear map
 $\widetilde{\tau}:D\to D$  via
$$\widetilde{\tau}(x)=\tau(x_0) + e \tau(x_1) + e^2\tau(x_2) +\dots + e^{m-1}\tau(x_{m-1}).$$
If $\tau^m=id$ then $\widetilde{\tau}^m=id$.

\begin{remark}
 Let $c\in L$.
 \\
(i) $\tl(xy) = \tl(x)\tl(y)$  and $\lambda (\tl(x)) = \tau (\lambda(x))$ for all $x,y \in D$,
where for any matrix $X=\lambda(x)$ representing left multiplication with $x$, $\tau(X)$ means applying $\tau$ to each entry of the matrix.
\\ (ii)
Let $D'=(K/F, \sigma,\tau(c))$ with standard basis $1,e',\dots,{e'}^{m-1}$.
For
$y=y_0+ey_1+\dots+e^{m-1}y_{m-1}\in D$ define
$y_{D'}=y_0+e'y_1+\dots+{e'}^{m-1}y_{m-1}\in D'$.
 By \cite[Proposition 4]{P13.2},
$N_{D/F}(\widetilde{\tau}(y))=\tau(N_{D/F}(y)).$
\end{remark}

Choose $d\in D^\times$.
 Then the $2m^2$-dimensional $F$-vector space
 $A=D\oplus D$ can be made into a unital algebra over $F_0$ via the multiplication
$$(u,v)(u',v')=(uu'+d \widetilde{\tau}(v) v',vu'+ \widetilde{\tau}( u)v'),$$
$$(u,v)(u',v')=(uu'+ \widetilde{\tau}(v) d v',vu'+ \widetilde{\tau}( u)v')$$
resp.
$$(u,v)(u',v')=(uu'+ \widetilde{\tau}(v) v' d,vu'+ \widetilde{\tau}( u)v')$$
for $u,u',v,v'\in D$ with unit element $1=(1_D,0)$.
The corresponding algebras are denoted by
${\rm It}(D,\tau,d)$, ${\rm It}_M(D,\tau,d)$, resp. ${\rm It}_R(D, \tau, d)$, and have dimension $2m^2[F:F_0]$ over $F_0$.
${\rm It}(D,\tau,d)$, ${\rm It}_M(D,\tau,d)$ and ${\rm It}_R(D,\tau,d)$ are called {\it iterated algebras} over $F$.

Every iterated algebra $A$ as above is a right $D$-modules
with  $D$-basis $\{1,f\}$.
We can therefore embed ${\rm End}_D(A)$ into the module
${\rm Mat}_2(D)$.
Furthermore, for $A={\rm It}(D,\tau,d)$
and $A={\rm It}_M(D,\tau,d)$ left multiplication
$L_x$ with $x\in A$ is a $D$-linear map, so that we have a well-defined  additive map
$$L:A\to {\rm End}_D(A)\subset {\rm Mat}_2(D),\quad x\mapsto L_x,$$
which is injective if $A$ is division. $L_x$ can also be viewed as a $K$-linear map
and
after a choice of  $K$-basis for $A$, we can embed ${\rm End}_K(A)$ into the vector space
${\rm Mat}_{2m}(K)$ via $\lambda: A\to {\rm Mat}_{2m}(K),$ $x \mapsto L_x$.
By restricting to $d\in L^\times$, we
 achieve that left multiplication $L_x$ in ${\rm It}_R(D,\tau,d)$ is a $K$-endomorphism and thus also can be represented
by a matrix with entries in $K$, as for the two other algebras.
Therefore if $d\in L^\times$, we can embed ${\rm End}_K(A)$ into the vector space
${\rm Mat}_{2m}(K)$ via $\lambda: A\to {\rm Mat}_{2m}(K),$ $x \mapsto L_x$ for $A={\rm It}_R(D,\tau,d)$ as well.

\begin{theorem}  \label{thm:oldmain} (\cite[Theorem 3.2]{P13.2}, \cite[Theorem 1]{R13})
 Let $D$ be a cyclic division algebra of degree $n$ over $F$ with norm $N_{D/F}$  and $d\in D^\times$.
Let $\tau\in {\rm Aut} (K)$ and suppose $\tau$ commutes with $\sigma$.
 Let $A={\rm It}(D,\tau,d)$, $A={\rm It}_M(D,\tau,d)$ or $A={\rm It}_R(D,\tau,d)$.
\\ (i) $A$ is a division algebra if
$$N_{D/F}(d)\not= N_{D/F}(z\widetilde{\tau}(z))$$
for all $z\in D$.
Conversely, if $A$ is a division algebra then
$d\not=z\widetilde{\tau}(z)$ for all $z\in D^\times$.
\\ (ii)  Suppose $c\in {\rm Fix}(\tau)$. Then:
\\ (a) $A$ is a division algebra if and only if  $d\not=z\widetilde{\tau}(z)$ for all $z\in D$.
\\ (b) $A$ is a division algebra if $N_{D/F}(d)\not=a \tau(a)$ for all $a\in N_{D/F}(D^\times)$.
\\ (iii)  Suppose $F\subset {\rm Fix}(\tau)$. Then $A$
 is a division algebra if $N_{D/F}(d)\not\in N_{D/F}(D^\times)^2$.
\end{theorem}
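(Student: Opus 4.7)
The plan is to characterize the failure of the division property in each of the three algebras $A = D \oplus fD$ by solving $(u,v)(u',v') = 0$ with both factors nonzero. Writing out the component equations from the multiplication formula, for $A = \mathrm{It}(D,\tau,d)$ one has
\begin{equation*}
uu' + d\,\widetilde{\tau}(v)\,v' = 0, \qquad v u' + \widetilde{\tau}(u)\,v' = 0;
\end{equation*}
for the other two algebras the first equation only differs in where $d$ sits. A quick inspection using that $D$ is itself a division algebra rules out the degenerate cases $u'=0$, $u=0$, or $v=0$, leaving the generic case $u,u',v\neq 0$ to deal with.

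From the second equation I would solve $v' = -\widetilde{\tau}(u)^{-1}\,v\,u'$, substitute into the first, and cancel $u'$ on the right (with a little extra care in $\mathrm{It}_R$, where $d$ sits on the far right) to obtain the key identity, which for $\mathrm{It}(D,\tau,d)$ takes the form $u = d\,\widetilde{\tau}(v)\,\widetilde{\tau}(u)^{-1}\,v$. Applying the reduced norm $N_{D/F}$ to this identity, using multiplicativity of $N_{D/F}$ on $D$ together with the formula $N_{D/F}(\widetilde{\tau}(y)) = \tau(N_{D/F}(y))$ from the Remark, and rearranging yields
\begin{equation*}
N_{D/F}(d) = N_{D/F}(z)\,\tau\bigl(N_{D/F}(z)\bigr) = N_{D/F}\bigl(z\,\widetilde{\tau}(z)\bigr)
\end{equation*}
for $z = uv^{-1}$, contradicting the hypothesis of (i). The converse half of (i) is a direct computation: given $d = z\widetilde{\tau}(z)$ with $z \in D^\times$, one verifies that $(z,-1)\cdot(\widetilde{\tau}(z),1) = (z\widetilde{\tau}(z) - d,\,0) = 0$ in each of the three algebras, producing a nontrivial zero divisor.

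For (ii), the extra hypothesis $c \in L = \mathrm{Fix}(\tau)$ combined with $\tau\sigma = \sigma\tau$ is precisely what promotes $\widetilde{\tau}\colon D \to D$ from an $L$-linear map to a ring homomorphism. Under this multiplicativity the key identity rearranges (right-multiply successively by $v^{-1}$, $\widetilde{\tau}(u)$, and $\widetilde{\tau}(v)^{-1}$) to $d = (uv^{-1})\,\widetilde{\tau}(uv^{-1})$, upgrading the norm-level obstruction of (i) to the stronger element-level equivalence in (ii)(a). Part (ii)(b) is then immediate from (i), since any $z \in D^\times$ gives $a := N_{D/F}(z) \in N_{D/F}(D^\times)$ with $N_{D/F}(z\widetilde{\tau}(z)) = a\,\tau(a)$. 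Finally, (iii) follows from (i) because $F \subset \mathrm{Fix}(\tau)$ forces $\tau(N_{D/F}(z)) = N_{D/F}(z)$, so $N_{D/F}(z\widetilde{\tau}(z)) = N_{D/F}(z)^2 \in N_{D/F}(D^\times)^2$.

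The main obstacle I expect is uniform bookkeeping across the three variants: the position of $d$ alters the cancellation leading to the key identity, and one must verify that the reduced-norm step absorbs these differences so that the same condition $N_{D/F}(d) = N_{D/F}(z\widetilde{\tau}(z))$ drops out each time. This works essentially because $N_{D/F}$ is multiplicative and therefore insensitive to the noncommutative position of $d$, so the three cases collapse into a single proof once the coordinate computations are set up carefully.
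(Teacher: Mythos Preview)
The paper does not prove this theorem; it is quoted from \cite{P13.2} and \cite{R13} as a preliminary result, so there is no in-paper argument to compare against. Your approach---eliminate degenerate factors using that $D$ is division, derive the key identity $u = d\,\widetilde{\tau}(v)\,\widetilde{\tau}(u)^{-1}v$ (or its $M$/$R$ variants), then apply $N_{D/F}$---is the standard one and is what one finds in the cited sources. Your handling of the converse of (i) via the explicit zero divisor $(z,-1)(\widetilde{\tau}(z),1)$, and of (ii)(a), (ii)(b), (iii) by successively strengthening the norm computation, is correct.

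There is one internal inconsistency you should fix. In proving the forward direction of (i) you invoke the identity $N_{D/F}(\widetilde{\tau}(y)) = \tau(N_{D/F}(y))$ ``from the Remark'', but that Remark is stated under the standing hypothesis $c \in L = \mathrm{Fix}(\tau)$. You then present $c \in \mathrm{Fix}(\tau)$ as a \emph{new} assumption appearing first in (ii). Concretely, passing from $N(d)=N(u)N(v)^{-1}\,N(\widetilde{\tau}(u))N(\widetilde{\tau}(v))^{-1}$ to $N(d)=N(z\widetilde{\tau}(z))$ with $z=uv^{-1}$ requires $N(\widetilde{\tau}(uv^{-1})) = N(\widetilde{\tau}(u))N(\widetilde{\tau}(v))^{-1}$, and a direct check (already for quaternions) shows this fails when $\tau(c)\neq c$. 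So either (i) should be read with the tacit hypothesis $c\in\mathrm{Fix}(\tau)$---which is consistent with the standing assumption $c\in F_0\subset L$ used from Section~3 onward and in \cite{P13.2}---or your argument for (i) needs an additional step that avoids the norm formula. You should state explicitly which reading you adopt.
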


\subsection{Design criteria for space-time block codes}\label{sec:def}

A space-time block code (STBC) for an $n_t$ transmit antenna MIMO system is a set of complex $n_t\times T$ matrices,
called codebook, that satisfies a number of properties which determine how well the code performs.
Here, $n_t$ is the number
 of transmitting antennas, $T$ the number of channels used.

Most of the existing  codes  are built from cyclic division algebras
over number fields $F$, in particular over $F=\mathbb{Q}(i)$ or $F= \mathbb{Q}(\omega)$
with $\omega=e^{2\pi i /3}$ a third root of unity, since these fields are used for
 the transmission of QAM or HEX constellations, respectively.

One goal is to find \emph{fully diverse} codebooks $\mathcal{C} $, where the difference of any two
code words   has full rank, i.e. with $\det(X-X')\not=0$ for all matrices $X\not=X',$ $X,X'\in \mathcal{C} $.

 If the minimum determinant of the code, defined as
$$\delta(\mathcal{C})=\inf_{X'\not=X''\in \mathcal{C}}|\det(X'-X'')|^2,$$
 is bounded below by a constant, even if the codebook $\mathcal{C}$   is infinite,
 the code $\mathcal{C} $ has \emph{non-vanishing determinant} (NVD).
 Since our codebooks $\mathcal{C}$ are based on the matrix representing left multiplication in
 an algebra, they are linear and thus their
 minimum determinant is given by
$$\delta(\mathcal{C})=\inf_{0\not=X\in \mathcal{C}}|\det(X)|^2.$$
 If $\mathcal{C}$ is fully diverse,
$\delta(\mathcal{C})$ defines the \emph{coding gain} $\delta(\mathcal{C})^{\frac{1}{n_t}}$.
The larger $\delta(\mathcal{C})$ is, the better the error performance of the code is expected to be.

If a STBC has NVD then it will perform well independently of the constellation size we choose.
The NVD property guarantees that a full rate linear STBC  has
optimal diversity-multiplexing gain trade-off (DMT) and
 also an asymmetric linear STBC with NVD often has
DMT
(for results on the relation between NVD and DMT-optimality for asymmetric linear STBCs, cf. for instance \cite{SR3}).

We  look at transmission over a MIMO fading channel with $n_t=nm$ transmit and $n$ receive antennas,
and assume the channel is coherent, that is the receiver has perfect knowledge of the channel.
We consider the  rate-$n$ case (where $mn^2$ symbols are sent).
The system is modeled as
$$Y=\sqrt{\rho}HS+N,$$
with $Y$ the complex $n_r\times T$ matrix consisting of the received signals, $S$ the
the complex $n_t\times T$ codeword matrix, $H$ is the the complex $n_r\times n_t$ channel matrix (which we assume to be
  known)  and $N$
the the complex $n_r\times T$ noise matrix, their entries being identically independently distributed Gaussian random variables with mean zero and
variance one. $\rho$ is the average signal to noise ratio.

 Since we assume the channel is coherent,
ML-decoding can be obtained via sphere decoding. The hope is to find codes which are easy to decode with a sphere decoder,
i.e. which are fast-decodable:
Let $M$ be the size of a complex constellation of coding symbols and assume the code $\mathcal{C} $ encodes $s$ symbols.
 If the decoding complexity by sphere decoder needs only $\mathcal{O}(M^l)$, $l<s$ computations, then
$\mathcal{C} $ is called \emph{fast-decodable}.

For a matrix $B$, let $B^*$ denote its Hermitian transpose. Consider a code $\mathcal{C}$ of rate $n$. Any
$X\in\mathcal{C}\subset {\rm Mat}_{mn\times mn}(\mathbb{C})$ can be written as a linear combination
$$X=\sum_{i=1}^{nm^2}g_iB_i,$$
of $nm^2$ $\mathbb{R}$-linearly independent basis matrices $B_1,\dots,B_{nm^2}$,
with $g_i\in\mathbb{R}$.
 Define
$$M_{g,k}=||B_g B^*_k+B_kB^*_g||.$$
Let $S$ be a real constellation of coding symbols.
A STBC with $s=nm^2$ linear independent real information symbols from $S$ in one code matrix is
called \emph{$l$-group decodable}, if there is a partition of $\{1,\dots,s\}$ into $l$ nonempty subsets
$\Gamma_1,\dots,\Gamma_l$, so that $M_{g,k}=0$, where $g,k$ lie in disjoint subsets $\Gamma_i,\dots,\Gamma_j$.
The code $\mathcal{C}$ then has decoding complexity $\mathcal{O}(|S|^{L})$, where $L=max_{1\leq i\leq l}|\Gamma_i|$.

\section{General iteration processes I and II}

We will use the notation defined below throughout the remainder of the paper:
Let $F$ and $L$ be fields and let $K$ be a cyclic extension of both $F$ and $L$ such that
\begin{enumerate}
\item $Gal(K/F) = \langle \sigma \rangle$ and $[K:F] = m$,
\item $Gal(K/L) = \langle \tau \rangle$ and $[K:L] = n$,
\item $\sigma$ and $\tau$ commute: $\sigma \tau = \tau \sigma$.
\end{enumerate}
Let $F_0=F\cap L$. Let $D=(K/F, \sigma, c)$ be an associative cyclic division algebra over $F$ of degree $m$ with norm
$N_{D/F}$  and $c\in F_0$. The condition that $c \in F_0$ means that $\widetilde{\tau}\in {\rm Aut}_{F_0}(D)$ of order $n$,
 see the definition of  $\widetilde{\tau}$ in Section 2.3.

\begin{definition} \label{maindef}
 Pick $d \in D^\times$. Define a multiplication on the right $D$-module
$ D \oplus fD \oplus f^2D \oplus \cdots \oplus f^{n-1}D$
 via
 \\ (i)
\[
 (f^i x)(f^j y) =
  \begin{cases}
   f^{i+j} \tl^j(x)y & \text{if } i+j < n \\
   f^{(i+j)-n} d \tl^j(x)y & \text{if } i+j \geq n
  \end{cases}
\]
for all $x, y \in D$, $i,j<n$, and call the resulting algebra ${\rm It}^n(D,\tau,d)$, or via
\\ (ii)
\[
 (f^i x)(f^j y) =
  \begin{cases}
   f^{i+j} \tl^j(x)y & \text{if } i+j < n \\
   f^{(i+j)-n}  \tl^j(x) d y  & \text{if } i+j \geq n
  \end{cases}
\]
for all $x, y \in D$, $i,j<n$, and call the resulting algebra ${\rm It}_M^n(D,\tau,d)$.
\end{definition}

 ${\rm It}^n(D,\tau,d)$ and ${\rm It}_M^n(D,\tau,d)$ are both nonassociative algebras over $F_0$ of dimension $nm^2[F:F_0]$
 with unit element $1\in D$ and contain $D$ as a subalgebra.
For both, $f^{n-1}f=d=ff^{n-1}.$
 If $d\in F^\times$ then ${\rm It}^n(D,\tau,d)={\rm It}_M^n(D,\tau,d)$.

Moreover, ${\rm It}^2(D,\tau,d)={\rm It}(D,\tau,d)$ and ${\rm It}_M^2(D,\tau,d)={\rm It}_M(D,\tau,d)$ are the
 iterated algebras from Section 2.3.
The algebras ${\rm It}^n(D,\tau,d)$ are canonical generalizations of the ones behind the iterated codes of  \cite{MO13},
 and employed in \cite{SP14}.

Let $A$ be either ${\rm It}^n(D, \tau, d)$ or ${\rm It}^n_M(D, \tau, d)$, unless specified differently.

\begin{lemma}\label{lem:lem3}
 (i) If $d\in K^\times$, then  $(K/L,\tau,d)$, viewed as an algebra over $F_0$,
 is a subalgebra of $A$.
 If $d\in L^\times$, then $(K/L,\tau,d)$ is an associative cyclic algebra of degree $n$, if $d\in K\setminus L$, $(K/L,\tau,d)$
 is a nonassociative cyclic algebra  of degree $n$.
\\ (ii) $A\otimes_{F_0} K={\rm Mat}_m(K)\oplus f{\rm Mat}_m(K)\oplus\dots\oplus f^{n-1}{\rm Mat}_m(K)$
 contains the $F_0$-algebra
${\rm Mat}_m(K)$ as a subalgebra and has zero divisors.
\\ If $d\in L^\times$ then $A\otimes_{F_0} K$ also contains the $F_0$-algebra ${\rm Mat}_n(K)$ as a subalgebra.
\\ (iii) Let $n=2s$  for some integer $s$.
Then  ${\rm It}(D, \tau^s, d)$ (resp. ${\rm It}_M(D, \tau^s, d)$) is isomorphic to a subalgebra of
 ${\rm It}^n(D, \tau, d)$ (resp. ${\rm It}^n_M(D, \tau, d)$).
\\ (iv) $D$ is contained in the middle nucleus of ${\rm It}^n(D, \tau, d)$.
\end{lemma}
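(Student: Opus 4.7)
For part (i), my plan is to restrict attention to the $L$-subspace $B = K \oplus fK \oplus \cdots \oplus f^{n-1}K$ of $A$. Since $\tl$ acts on $K \subset D$ as $\tau$, the defining rule of $A$ collapses on $B$ to $(f^i x)(f^j y) = f^{i+j} \tau^j(x) y$ when $i+j < n$ and to $f^{i+j-n} d \tau^j(x) y$ when $i+j \geq n$, for all $x, y \in K$. This matches verbatim the multiplication rules of $(K/L, \tau, d)$ given in Section 2.2, so $B$ is closed under multiplication and carries the claimed cyclic algebra structure; the associative/nonassociative dichotomy is then dictated by whether $d \in L$ or $d \in K \setminus L$.

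For part (ii), I would apply the splitting $D \otimes_F K \cong {\rm Mat}_m(K)$, which comes from $K$ being a maximal subfield of the cyclic division algebra $D$, and push it through the direct sum decomposition $A = \bigoplus_{i=0}^{n-1} f^i D$ of right $D$-modules. The multiplication of $A$ survives base extension, giving the stated decomposition of $A \otimes_{F_0} K$; the $i = 0$ summand is ${\rm Mat}_m(K)$, whose matrix units provide zero divisors (e.g. $e_{11} e_{22} = 0$). When $d \in L^\times$, the cyclic subalgebra $(K/L, \tau, d)$ produced in (i) is associative with $K$ as a maximal subfield, and applying the same splitting argument inside $A \otimes_{F_0} K$ yields a copy of ${\rm Mat}_n(K)$.

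For part (iii), with $n = 2s$, I define $\phi : {\rm It}(D, \tau^s, d) \to {\rm It}^n(D, \tau, d)$ by $(u, v) \mapsto u + f^s v$ and verify that it is an injective algebra homomorphism by a four-case check of the products $(u,0)(u',0)$, $(u,0)(0,v')$, $(0,v)(u',0)$, and $(0,v)(0,v')$. Only the last case is non-trivial: the indices in Definition \ref{maindef} add up to $s + s = n$, producing the term $f^0 d \, \tl^s(v) v' = d \tl^s(v) v'$, which matches the $D$-component of $(0,v)(0,v')$ in ${\rm It}(D, \tau^s, d)$. The argument for ${\rm It}_M$ is identical except that $d$ is placed between $\tl^s(v)$ and $v'$.

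For part (iv), the key calculation is that, for $u \in D$ and arbitrary $x = f^i x_0$, $y = f^j y_0$ with $x_0, y_0 \in D$, both $(xu)y$ and $x(uy)$ evaluate to the same expression in each of the cases $i + j < n$ and $i + j \geq n$, using Definition \ref{maindef}(i) together with the fact that $\tl$ is an algebra homomorphism on $D$ so that $\tl^j(x_0 u) = \tl^j(x_0) \tl^j(u)$. In the second case the factor $d$ sits at the far left of the product and is untouched by repositioning the parentheses. Hence $[A, D, A] = 0$ and $D \subseteq {\rm Nuc}_m({\rm It}^n(D, \tau, d))$. The corresponding computation in ${\rm It}_M^n$ instead requires $d \, \tl^j(u) = \tl^j(u) \, d$, which fails in general; recognizing this asymmetry, which is precisely why (iv) is stated only for ${\rm It}^n$, is the main subtle point in the lemma.
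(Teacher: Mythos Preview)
Your proof is correct and follows essentially the same approach as the paper's own argument, which is terser but rests on the same ideas: restricting the multiplication to $K$-entries for (i), the splitting of $D$ (and, when $d\in L^\times$, of the cyclic subalgebra $(K/L,\tau,d)$) over $K$ for (ii), the subalgebra $D\oplus f^sD$ for (iii), and the reduction of (iv) to the multiplicativity of $\tl$, i.e., to $\tau(c)=c$. Your explicit four-case check in (iii) and your closing observation on why (iv) fails for ${\rm It}_M^n$ go a bit beyond the paper but are consistent with it.
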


\begin{proof}
(i) Restricting the multiplication of $A$ to entries in $K$ proves the assertion immediately:
By slight abuse of notation, we have ${\rm It}^n(K,\tau,d)=(K/L,\tau,d).$
\\ (ii)  is trivial as $D\otimes_{F_0} K\cong {\rm Mat}_m(K)$ splits. If $d\in L^\times$ then $A$ has the $F_0$-subalgebra
$(K/L,\tau,d)$, which as an algebra has splitting field $K$.
\\ (iii)  It is straightforward to check that $A$ is isomorphic to $D \oplus f^sD$, which is a subalgebra of $A$
under the multiplication inherited from $A$.
\\
(iv) By linearity of multiplication, we only need to show that
\[((f^ix)y)f^jz = f^ix(y(f^jz)),\]
for all $x,y,z \in D$ and all integers $0 \leq i,j \leq n-1$.
 A straightforward calculation shows that these are equal if and only if
 $\tl(x)\tl(y) = \tl(xy)$ for all $x,y \in D$. This is true if and only if $\tau(c) = c$.
\end{proof}

Lemma \ref{lem:lem3} (iii) can be generalized to the case where $n$ is any composite number if needed.

 $A$ is a free right $D$-module of rank $n$,
with right $D$-basis
 $\{1,f,\dots,f^{n-1}\}$
and we can embed ${\rm End}_D(A)$ into  ${\rm Mat}_n(D)$.
Left multiplication $L_x$ with $x\in A$
 is a right $D$-endomorphism, so that we obtain a well-defined  additive map
$$\lambda:A\to  {\rm Mat}_n(D),\quad x \mapsto L_x.$$
Let $x,y\in A$, $x = x_0 + fx_1 + f^2 x_2 +\cdots + f^{n-1}x_{n-1}$,
$y = y_0 + fy_1 + \cdots f^{n-1}y_{n-1}$ with $x_i,y_i\in D$.
 If we represent $y$ as a column vector $(y_0, y_1, \ldots, y_{n-1})^T$,
  then we can write the product of $x$ and $y$ in  $ A$ as a matrix multiplication
\[ xy = M(x)y,\]
where $M(x)$ is an $n \times n$ matrix with entries in $D$ given by
\[M(x) = \left[ \begin{array}{ccccc}
x_0 & d \tl(x_{n-1})& d \tl^2(x_{m-2}) & \cdots & d \tl^{n-1}(x_1) \\
x_1 & \tl(x_0) & d \tl^2(x_{n-1}) & \cdots & d \tl^{n-1}(x_{2}) \\
x_2 & \tl(x_1) & \tl^2(x_0) & \cdots & d \tl^{n-1}(x_3)\\
\vdots & \vdots & \vdots & \ddots & \vdots \\
x_{n-1} & \tl(x_{n-2}) & \tl^2(x_{n-3}) & \cdots & \tl^{n-1}(x_0) \end{array} \right] \]
if $A={\rm It}^n(D, \tau, d)$ and
\[M(x) = \left[ \begin{array}{ccccc}
x_0 &  \tl(x_{n-1}) d&  \tl^2(x_{m-2}) d & \cdots &  \tl^{n-1}(x_1) d \\
x_1 & \tl(x_0) &  \tl^2(x_{n-1}) d & \cdots &  \tl^{n-1}(x_{2}) d\\
x_2 & \tl(x_1) & \tl^2(x_0) & \cdots &  \tl^{n-1}(x_3) d\\
\vdots & \vdots & \vdots & \ddots & \vdots \\
x_{n-1} & \tl(x_{n-2}) & \tl^2(x_{n-3}) & \cdots & \tl^{n-1}(x_0) \end{array} \right] \]
if $A={\rm It}_M^n(D, \tau, d)$.

\begin{example}
Let $A={\rm It}^3(D,\tau,d)$ or $A={\rm It}_M^3(D,\tau,d)$ with $d\in D$.
For $f=(0,1,0)$, we have $f^2=(0,0,1)$ and $f^2f=(d,0,0)=ff^2.$
 The multiplication in ${\rm It}^3(D,\tau,d)$ is given by
\[(u,v,w)(u',v',w')
=(
\begin{bmatrix}
u & d\widetilde{\tau}(w) & d\widetilde{\tau}^2(v)  \\
v & \widetilde{\tau}(u) & d\widetilde{\tau}^2(w)  \\
w & \widetilde{\tau}(v) & \widetilde{\tau}^2(u)  \\
\end{bmatrix}
\left [\begin {array}{c}
u'  \\
v'  \\
w'
\end {array}\right ])^T,
\]
for $u,v,w,u',v',w'\in D$, i.e.
$$(u,v,w)(u',v',w')=(uu'+d\widetilde{\tau}(w)v'+d\widetilde{\tau}^2(v)w',
                     vu'+\widetilde{\tau}(u)v'+d\widetilde{\tau}^2(w)w',
                     wu'+\widetilde{\tau}(v)v'+\widetilde{\tau}^2(u)w').$$
 The multiplication in ${\rm It}_M^3(D,\tau,d)$ is given by
\[(u,v,w)(u',v',w')
=(
\begin{bmatrix}
u &\widetilde{\tau}(w) d & \widetilde{\tau}^2(v)d  \\
v & \widetilde{\tau}(u) & \widetilde{\tau}^2(w) d \\
w & \widetilde{\tau}(v) & \widetilde{\tau}^2(u)  \\
\end{bmatrix}
\left [\begin {array}{c}
u'  \\
v'  \\
w'
\end {array}\right ])^T,
\]
for $u,v,w,u',v',w'\in D$, hence
$$(u,v,w)(u',v',w')=(uu'+\widetilde{\tau}(w)dv'+\widetilde{\tau}^2(v)dw',
                     vu'+\widetilde{\tau}(u)v'+\widetilde{\tau}^2(w)dw',
                     wu'+\widetilde{\tau}(v)v'+\widetilde{\tau}^2(u)w').$$
\end{example}

 If $\{1,e,\ldots, e^{m-1}\}$ is the standard basis for $D$,
 then
\[\{1, e, \ldots, e^{m-1}, f, fe, \ldots, f^{n-1}e^{m-1}\}\]
is a basis for the right $K$-vector space $A$.
Writing elements in
$A$ as column vectors of length $mn$ with entries in $K$, we obtain
\[xy = \lambda(M(x)) y,\]
where
\begin{equation} \label{equ:main}
\lambda(M(x)) = \left[ \begin{array}{cccc}
\lambda(x_0) & \lambda(d) \tau(\lambda(x_{n-1}))&  \cdots & \lambda(d) \tau^{n-1}(\lambda(x_1)) \\
\lambda(x_1) & \tau(\lambda(x_0)) & \cdots & \lambda(d) \tau^{n-1}(\lambda(x_{2})) \\
\vdots & \vdots  & \ddots & \vdots \\
\lambda(x_{n-1}) & \tau(\lambda(x_{n-2})) & \cdots & \tau^{n-1}(\lambda(x_0)) \end{array} \right]
\end{equation}
for $A={\rm It}^n(D, \tau, d)$, and
\begin{equation}
\lambda(M(x)) = \left[ \begin{array}{cccc}
\lambda(x_0) &  \tau(\lambda(x_{n-1})) \lambda(d)&  \cdots &  \tau^{n-1}(\lambda(x_1))\lambda(d) \\
\lambda(x_1) & \tau(\lambda(x_0)) & \cdots &       \tau^{n-1}(\lambda(x_{2})) \lambda(d) \\
\vdots & \vdots  & \ddots & \vdots \\
\lambda(x_{n-1}) & \tau(\lambda(x_{n-2})) & \cdots & \tau^{n-1}(\lambda(x_0)) \end{array} \right]
\end{equation}
for $A={\rm It}_M^n(D, \tau, d)$,
 is the $mn \times mn$ matrix obtained by taking the left regular representation of each entry in the
 matrix $M(x)$.
The
matrix $\lambda(M(x))$ represents the left multiplication by the element $x$ in $A$.

\begin{remark} For all $X=\lambda(M(x)) =\lambda(x)\in
\lambda(A)\subset {\rm Mat}_{nm}(K),$ we have ${\rm det}\, X\in F$.
This is proved in  \cite{SP14} for $It^n(D, \tau, d)$. For ${\rm It}_M^n(D, \tau, d)$,
 the proof is  analogous.
(For $n=2$  this is \cite[Theorem 19]{P13.2}.)
\end{remark}

\begin{theorem} \label{thm:main}
(i) Let $x \in A$ be nonzero. If $x$ is not a left zero divisor in $A$, then ${\rm det}\,\lambda(M(x))\not=0$.
\\ (ii) $A$ is division if and only if $\lambda(M(x))$ is invertible for every nonzero $x \in A$.
\end{theorem}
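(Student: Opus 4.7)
The plan is to interpret $L_x$ as a $K$-linear endomorphism of the finite-dimensional right $K$-vector space $A$ and to exploit the standard equivalence of injectivity and bijectivity on such spaces, combined with the cited characterization in Section~2.1 of division algebras by the absence of zero divisors.

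As a preliminary observation, I would recall from the paragraph preceding the theorem that $L_x$ is a right $D$-endomorphism of $A$. Since $K \subset D$, this makes $L_x$ a fortiori right $K$-linear when $A$ is viewed as a right $K$-vector space, and with respect to the basis $\{f^i e^j : 0\le i<n,\,0\le j<m\}$ its matrix is precisely the element $\lambda(M(x))\in{\rm Mat}_{nm}(K)$ displayed in~(\ref{equ:main}). That $\dim_K A = nm$ is finite is what drives everything.

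For part (i), the assumption that $x$ is not a left zero divisor translates to $L_x$ being injective; since $\dim_K A<\infty$, injectivity forces bijectivity, so $\lambda(M(x))$ is invertible and in particular $\det\lambda(M(x))\neq 0$. For part (ii), the forward implication is just (i) applied to every nonzero $x$. For the converse I would argue that invertibility of $\lambda(M(x))$ for all nonzero $x$ makes each $L_x$ ($x\neq 0$) injective, so $A$ has no left zero divisors; by a symmetric one-line argument (a relation $ba=0$ with $a\neq 0$ would put $a$ into $\ker L_b$, forcing $b=0$), $A$ has no right zero divisors either, and the cited fact from \cite[pp.~15, 16]{Sch2} then yields that $A$ is division.

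I do not expect a serious obstacle: the theorem is essentially a clean packaging of the principle ``injective $\Leftrightarrow$ bijective for a linear endomorphism of a finite-dimensional vector space'' in the concrete matrix setting $\lambda(M(x))$. The one point that deserves a line of care is passing from ``no left zero divisors'' to ``no zero divisors'' in the nonassociative framework, but as sketched above this is a one-line consequence of the injectivity of each $L_b$, after which the cited equivalence closes the argument.
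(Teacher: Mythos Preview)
Your proof is correct and essentially identical to the paper's: both parts reduce to the observation that $L_x$ is represented by $\lambda(M(x))$ on the finite-dimensional $K$-vector space $A$, so injectivity of $L_x$ is equivalent to invertibility of $\lambda(M(x))$, and absence of zero divisors is equivalent to division by the cited fact from \cite{Sch2}. Your extra line deriving ``no right zero divisors'' from ``no left zero divisors'' is harmless but redundant, since the latter already says $xy\neq 0$ for all nonzero $x,y$.
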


\begin{proof}
(i) Suppose $\lambda(M(x))$ is a singular matrix. Then the system of $mn$ linear equations
\[\lambda(M(x)) (y_0, \ldots, y_{mn-1}) = 0\]
has a non-trivial solution $(y_0, \ldots, y_{mn-1}) \in K^{mn}$ which contradicts the assumption that $x$ is not
a left zero divisor in $A$.
\\ (ii) It remains to show that $\lambda(M(x))$ is invertible for every nonzero $x \in A$ implies that $A$ is division:
for all $x\not=0$, $y\not=0$ we have that $xy = \lambda(M(x)) y=0$ implies that $y=\lambda(M(x))^{-1}0=0$, a contradiction.
\end{proof}

The following  result concerning left zero divisors is proved analogously to
 \cite{R13}, Appendix A and requires Lemma \ref{nuclem}:

\begin{theorem}\label{le:mainthm}
If $d \neq z \tl(z) \tl^2(z) \ldots \tl^{n-1}(z)$ for all $z \in D$, then no element
$x=x_0 + fx_1\in A$ is a left zero divisor.
\end{theorem}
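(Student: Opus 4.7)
The plan is to argue by contradiction in the style of \cite{R13}, Appendix A. Suppose $x = x_0 + fx_1 \in A$ is a nonzero left zero divisor, so there exists $y = \sum_{j=0}^{n-1} f^j y_j \neq 0$ with $xy = 0$. Expanding via Definition \ref{maindef} and collecting the coefficients of $1, f, \ldots, f^{n-1}$, the equation $xy = 0$ is equivalent to
\[
x_0 y_0 + d\,\tl^{n-1}(x_1) y_{n-1} = 0, \qquad \tl^i(x_0) y_i + \tl^{i-1}(x_1) y_{i-1} = 0 \quad (1 \le i \le n-1),
\]
with a parallel system for $A = {\rm It}^n_M(D,\tau,d)$ in which the first equation reads $x_0 y_0 + \tl^{n-1}(x_1)\, d\, y_{n-1} = 0$.

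The degenerate cases are disposed of immediately: if $x_0 = 0$, then the subsystem $\tl^{i-1}(x_1) y_{i-1} = 0$ ($1\le i\le n-1$) forces $y_0 = \cdots = y_{n-2} = 0$ since $x_1 \neq 0$ and $D$ is division, and then $d\tl^{n-1}(x_1) y_{n-1} = 0$ gives $y_{n-1} = 0$; the case $x_1 = 0$ is analogous. So assume $x_0, x_1 \neq 0$. The recurrence $y_i = -\tl^i(x_0)^{-1} \tl^{i-1}(x_1) y_{i-1}$ forces $y_0 \neq 0$ (else all $y_i$ vanish) and produces a closed form for $y_{n-1}$ in terms of $y_0$. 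Substituting into the first equation and cancelling $y_0$ (valid because $D$ is a division ring, and $\tl$ is a ring automorphism of $D$ since $c \in F_0 \subseteq {\rm Fix}(\tau)$) produces, after inverting and reindexing via $\tl^n = {\rm id}$, a relation
\[
(-1)^n d = z\,\tl(z)\,\tl^2(z) \cdots \tl^{n-1}(z)
\]
for a suitable $z \in D$: for $A = {\rm It}^n(D,\tau,d)$ one takes $z = x_0 x_1^{-1}$, and for $A = {\rm It}^n_M(D,\tau,d)$ one takes $z = \tl^{n-1}(x_1^{-1}\tl(x_0))$, which absorbs the interior occurrence of $d$.

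If $n$ is even, this directly contradicts the hypothesis. If $n$ is odd, replacing $z$ by $-z$ multiplies the right-hand side by $(-1)^n = -1$, yielding $z' \in D$ with $z'\tl(z')\cdots\tl^{n-1}(z') = d$, again a contradiction. The main obstacle is the rearrangement of the iterated noncommutative product: the expression produced naturally by iterated back-substitution appears in reverse order and must be inverted and reindexed via $\tl^n = {\rm id}$ to match the canonical form $z\tl(z)\cdots\tl^{n-1}(z)$ appearing in the hypothesis. In the ${\rm It}^n_M$ case the additional bookkeeping to absorb the interior $d$ is the technically fiddliest step and is presumably where Lemma \ref{nuclem} enters; the parity sign is a minor afterthought.
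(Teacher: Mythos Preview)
Your argument is correct and is precisely the route the paper indicates: the paper gives no detailed proof, stating only that the result ``is proved analogously to \cite{R13}, Appendix A,'' i.e.\ by expanding $xy=0$ in the $D$-basis, back-substituting the recurrence $y_i=-\tl^i(x_0)^{-1}\tl^{i-1}(x_1)y_{i-1}$, and cancelling $y_0$ to exhibit $(-1)^n d$ as a product $z\,\tl(z)\cdots\tl^{n-1}(z)$. Your formula for $z$ in the ${\rm It}^n_M$ case is correct, though the word ``absorbs'' is misleading: what actually happens is that $d$ emerges as a conjugate $v^{-1}\bigl(u\,\tl(u)\cdots\tl^{n-1}(u)\bigr)v$ with $u=x_0x_1^{-1}$ and $v=\tl^{n-1}(x_1)$, and the telescoping substitution $z=v^{-1}u\,\tl(v)=\tl^{n-1}(x_1^{-1}\tl(x_0))$ rewrites this in the required form via $\tl^n=\mathrm{id}$; nothing in the computation genuinely requires Lemma~\ref{nuclem}---only that $D$ is division and that $\tl$ is multiplicative on $D$, both of which you already invoke.
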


\subsection{}

In this section,  $A={\rm It}^n(D, \tau, d)$. We assume $d \in F^\times$, unless explicitly stated otherwise.

\begin{lemma}\label{nuclem}
 (i) If $d\not\in F_0$ then $D={\rm Nuc}_m(A)={\rm Nuc}_l(A)$.
\\ (ii) Let $F'$ and $L'$ be fields and let $K'$ be a cyclic extension of both $F'$ and $L'$ such that
 $Gal(K'/F) = \langle \sigma' \rangle$ and $[K:F] = m'$,
 $Gal(K'/L') = \langle \tau' \rangle$ and $[K:L] = n'$,
 $\sigma'$ and $\tau'$ commute.
Assume $F_0=F'\cap L'$ and $d'\in F'^\times$.
Let $D'=(K'/F', \sigma', c')$ be a cyclic division algebra over $F'$ of degree $m'$, $c'\in F_0$.
 If ${\rm It}^n(D, \tau, d)\cong {\rm It}^{n'}(D', \tau', d')$  then $D\cong D'$ and thus also $F\cong F'$, $m=m'$
and $n=n'$.
\end{lemma}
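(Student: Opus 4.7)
The plan is to establish (i) directly and deduce (ii) from it. For (i), I need the four inclusions $D\subseteq{\rm Nuc}_m(A)$, ${\rm Nuc}_m(A)\subseteq D$, $D\subseteq{\rm Nuc}_l(A)$, ${\rm Nuc}_l(A)\subseteq D$. The first is Lemma~\ref{lem:lem3}(iv). For $D\subseteq{\rm Nuc}_l(A)$ I expand $[a,f^i,f^j]$ for $a\in D$ directly from the multiplication rule; the only nontrivial case is $i+j\geq n$, where on one side one obtains $f^{i+j-n}\widetilde{\tau}^{i+j-n}(a)\,d$ and on the other $f^{i+j-n}d\,\widetilde{\tau}^{i+j-n}(a)$, and these agree because $d\in F^\times$ lies in the centre of $D$.

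For the two reverse inclusions, write $x=x_0+\sum_{k\geq 1}f^kx_k$ in ${\rm Nuc}_m(A)$ (respectively ${\rm Nuc}_l(A)$). Since $D$ is contained in either nucleus, we may subtract $x_0$ and assume $x_0=0$. The key computation is $[f^j,x,f]$ for the middle nucleus (respectively $[x,f^j,f]$ for the left nucleus), performed for $j=1,2,\ldots,n-1$ in succession. Tracking the wrap-around at sum $n$ carefully in both orderings of the triple product, the two sides differ only in a single term: on one side the factor $d$ sits inside a $\widetilde{\tau}$ (yielding $\tau(d)=\widetilde{\tau}(d)$), on the other it sits outside. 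For $j=1$ the associator reduces to $f\,(\tau(d)-d)\,\widetilde{\tau}(x_{n-1})$, and since $d\in F\setminus F_0$ gives $\tau(d)\neq d$ and $\widetilde{\tau}$ is bijective, we conclude $x_{n-1}=0$. Inducting on $j$ under the hypothesis $x_{n-1}=\cdots=x_{n-j+1}=0$, the same computation yields $[f^j,x,f]=f\,(\tau(d)-d)\,\widetilde{\tau}(x_{n-j})$, forcing $x_{n-j}=0$. Hence $x\in D$, giving both equalities.

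Part (ii) is essentially structural. An $F_0$-algebra isomorphism $\varphi\colon{\rm It}^n(D,\tau,d)\to{\rm It}^{n'}(D',\tau',d')$ preserves the middle nucleus, so by (i) it restricts to an isomorphism $\varphi|_D\colon D\to D'$ of $F_0$-algebras, proving $D\cong D'$. Since $D=(K/F,\sigma,c)$ is central simple over $F$, the centre of $D$ viewed as an $F_0$-algebra still equals $F$, so $\varphi(F)=F'$ and $F\cong F'$. Comparing $F_0$-dimensions of $D$ and $D'$ yields $m^2[F:F_0]=m'^2[F':F_0]$, and since $[F:F_0]=[F':F_0]$ from $F\cong F'$ this forces $m=m'$. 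Finally, equating the total dimensions $nm^2[F:F_0]=n'm'^2[F':F_0]$ gives $n=n'$.

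The main obstacle is the bookkeeping in (i): one must verify that the associator $[f^j,x,f]$ cleanly isolates the coefficient $x_{n-j}$ with factor $\tau(d)-d$ while all other terms cancel exactly, both in the $D$-position and in positions $f^{j+k+1}$ for $k<n-j-1$. Boundary cases ($k=n-j-1$ producing a $D$-term, and $k=n-j$ producing the $f$-term that carries the non-trivial difference) must be handled separately. Once this pattern is verified, the induction closes immediately, and the left-nucleus argument is essentially a transpose of the middle-nucleus one.
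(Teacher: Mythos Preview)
Your argument is correct. The only place to tighten is the inclusion $D\subseteq{\rm Nuc}_l(A)$: strictly speaking you must check $[a,f^iu,f^jv]$ for all $u,v\in D$, not just $[a,f^i,f^j]$, but the same computation goes through verbatim (the extra factors $\widetilde{\tau}^j(u)v$ appear identically on both sides), so this is cosmetic. Your associator bookkeeping for the reverse inclusions is accurate; I verified the base case $[f,x,f]=f(\tau(d)-d)\widetilde{\tau}(x_{n-1})$ and the left-nucleus analogue $[x,f,f]=f(\tau(d)-d)\widetilde{\tau}^2(x_{n-1})$, and the induction proceeds as you describe.

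The route, however, differs from the paper's. The paper does not compute associators at all: it observes (in the proof of Theorem~\ref{thm:skew}) that ${\rm It}^n(D,\tau,d)$ coincides with Petit's algebra $S_f$ for $f(t)=t^n-d$ in the twisted polynomial ring $D[t;\widetilde{\tau}^{-1}]$, and then simply cites Petit \cite[(2)]{P66}, where the left and middle nuclei of $S_f$ are determined in general. Part~(ii) is handled identically in both approaches, via preservation of the middle nucleus under isomorphism. Your direct computation is self-contained and elementary, and makes explicit exactly where the hypothesis $d\notin F_0$ (hence $\tau(d)\neq d$) enters; the paper's approach is shorter but outsources the work to Petit's general theory, which is already being invoked for the division criterion in Theorem~\ref{thm:skew}.
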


\begin{proof}
(i) follows from Theorem \ref{thm:skew} \cite[(2)]{P66}.
 \\
 (ii) follows from (i), since every isomorphism preserves the middle nucleus.
\end{proof}

\begin{theorem} \label{thm:skew}
 ${\rm It}^n(D,\tau,d)$
is a division algebra if and only if the polynomial $$f(t)=t^n-d$$ is irreducible in the twisted polynomial ring
$D[t;\widetilde{\tau}^{-1}]$.
\end{theorem}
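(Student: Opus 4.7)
The plan is to realize ${\rm It}^n(D,\tau,d)$ as the Petit algebra associated to the polynomial $f(t) = t^n - d$ inside the twisted polynomial ring $R = D[t;\tl^{-1}]$, and then invoke Petit's general theorem that such an algebra is a division algebra if and only if the defining polynomial is irreducible.

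Concretely, I would set $R=D[t;\tl^{-1}]$ with commutation rule $tx=\tl^{-1}(x)t$ for $x\in D$; equivalently $xt^j=t^j\tl^j(x)$. Since $\tl^n=\mathrm{id}$, the element $f(t)=t^n-d$ is a monic polynomial in $R$ of degree $n$, and by right-division every class in $R/Rf$ has a unique representative of degree less than $n$. Let $A_f$ denote the right $D$-module $R/Rf$ equipped with the Petit multiplication $g\cdot h := $ right remainder of $gh$ modulo $f$; this is precisely the Petit construction attached to $(R,f)$.

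The identification is $\phi\colon A_f\to {\rm It}^n(D,\tau,d)$, $t^i x \mapsto f^i x$, extended right $D$-linearly. To verify $\phi$ is an algebra morphism, one computes in $R$
\[
(t^i x)(t^j y) = t^i\cdot (x t^j)\cdot y = t^{i+j}\,\tl^j(x)\,y,
\]
using $xt^j=t^j\tl^j(x)$. For $i+j<n$ this is already of degree less than $n$ and agrees with $(f^i x)(f^j y)=f^{i+j}\tl^j(x)y$ from Definition~\ref{maindef}(i). For $i+j\geq n$, one rewrites $t^{i+j}=t^{(i+j)-n}(t^n-d)+t^{(i+j)-n}d$; since $t^{(i+j)-n}(t^n-d)\in Rf$, one has $t^{i+j}\equiv t^{(i+j)-n}d\pmod{Rf}$, hence
\[
(t^i x)(t^j y)\ \equiv\ t^{(i+j)-n}\,d\,\tl^j(x)\,y\pmod{Rf},
\]
which matches $(f^i x)(f^j y)=f^{(i+j)-n}d\tl^j(x)y$ in ${\rm It}^n(D,\tau,d)$. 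Extending bilinearly identifies the two algebras.

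Given the isomorphism, the theorem follows from Petit's result \cite[(2)]{P66}: the Petit algebra $A_f$ attached to a monic polynomial $f$ of positive degree in the twisted polynomial ring over a division ring is a division algebra if and only if $f$ is irreducible in $R$. The substantive content of the theorem thus resides in Petit's result, not in our contribution; the only genuine task on our side is the product match above. The main point to get right is the choice of conventions, namely $R=D[t;\tl^{-1}]$ (not $D[t;\tl]$) and the left ideal $Rf$ (not the right ideal $fR$): with either convention flipped, the coefficient $d$ appearing when $i+j\geq n$ would be replaced by $\tl^{k}(d)$ for a nonzero power $k$, which would agree with the ${\rm It}^n$-multiplication only under the extra hypothesis $\tl(d)=d$.
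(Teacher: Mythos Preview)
Your approach is the paper's: identify ${\rm It}^n(D,\tau,d)$ with Petit's algebra $S_f$ for $f=t^n-d$ in $R=D[t;\tl^{-1}]$ and then invoke \cite{P66} for the division criterion. The paper defers the product check to \cite{P14}, so your explicit verification adds detail the paper omits.

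One step needs justification. From $t^{i+j}\equiv t^{(i+j)-n}d\pmod{Rf}$ you pass to $(t^ix)(t^jy)\equiv t^{(i+j)-n}\,d\,\tl^j(x)y\pmod{Rf}$ by right-multiplying with $a:=\tl^j(x)y\in D$; but $Rf$ is only a \emph{left} ideal, so right multiplication need not preserve the congruence. Computing the right remainder directly instead, one has $t^{i+j}a=t^{(i+j)-n}(a\,t^n)=(t^{(i+j)-n}a)f+t^{(i+j)-n}a\,d$ (using $\tl^n=\mathrm{id}$, so $t^na=at^n$), and the Petit product is $t^{(i+j)-n}\tl^j(x)y\,d$ --- the ${\rm It}_R^n$ formula, not the ${\rm It}^n$ one. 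The two coincide precisely because of the standing hypothesis $d\in F^\times$ of the ambient subsection (so $d$ is central in $D$, hence $ad=da$ and indeed $fa=af$ for $a\in D$); you should make this dependence explicit. Your closing remark about conventions tracks whether powers of $\tl$ hit $d$, but misses the left/right placement of $d$ relative to $\tl^j(x)y$, which is exactly what distinguishes ${\rm It}^n$ from ${\rm It}_R^n$.
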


\begin{proof}
 Let  $R=D[t;\widetilde{\tau}^{-1}]$  as defined in \cite{J96} and $f(t)=t^n-d\in R$.
 Let ${\rm mod}_r f$ denote the remainder of right division by $f$ in $R$.
 Then the vector space $V=\{g\in D[t;\widetilde{\tau}^{-1}]\,|\, {\rm deg}(g)<n\}$ together with the multiplication
 $$g\circ h=gh \,\,{\rm mod}_r f $$
 becomes a nonassociative algebra denoted $S_f=(V,\circ)$ over $F_0$
 \cite{P66}.  A straighforward calculation shows that
${\rm It}^n(D,\tau,d)= S_f$
\cite{P14}. By  \cite[p.~13-08 (9)]{P66}, ${\rm It}^n(D,\tau,d)= S_f$ is division if $f$ is irreducible.
Conversely, if $f=f_1f_2$ is reducible then $f_1$ and $f_2$ yield zero divisors in ${\rm It}^n(D,\tau,d)= S_f$.
\end{proof}

Theorem \ref{thm:skew} together with the results in \cite{P66} and \cite{B14}, cf. \cite{P14}, imply:

\begin{theorem}\label{mainthm}
(i) Suppose that $n$ is prime and in case $n\not=2,3$, additionally that $F_0$ contains a primitive $n$th root of unity.
 Then ${\rm It}^n(D,\tau,d)$ is a division algebra if and only if
$$d\not=z\widetilde{\tau}(z)\tl^2(z)\cdots \widetilde{\tau}^{n-1}(z)$$
for all $z\in D$.
\\ (ii)  (cf. \cite{B14}) ${\rm It}^4(D,\tau,d)$ is a division algebra
if and only if
$$d\not=z\widetilde{\tau}(z)\widetilde{\tau}^2(z)\widetilde{\tau}^3(z)$$
and
$$ \widetilde{\tau}^2(z_1)\widetilde{\tau}^3(z_1)z_1+\widetilde{\tau}^2(z_0)z_1+\widetilde{\tau}^2(z_1)
\widetilde{\tau}^3(z_0)\not=0 \text{ or } \widetilde{\tau}^2(z_0)z_0+\widetilde{\tau}^2(z_1)\widetilde{\tau}^3(z_0)z_0\not=d$$
 for all $z,z_0,z_1\in D$.
\end{theorem}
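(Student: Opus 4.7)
The plan is to combine Theorem \ref{thm:skew}, which reduces ${\rm It}^n(D,\tau,d)$ being a division algebra to the irreducibility of $f(t) = t^n - d$ in $R = D[t;\widetilde{\tau}^{-1}]$, with irreducibility criteria for $t^n - d$ from Petit \cite{P66} and Brown \cite{B14}, cf.\ \cite{P14}. Since $c\in F_0$, $\widetilde{\tau}$ is a ring automorphism of $D$, so the skew polynomial ring $R$ is well-defined and $\widetilde{\tau}^i(xy) = \widetilde{\tau}^i(x)\widetilde{\tau}^i(y)$; this is what makes the coefficient-by-coefficient computations below legitimate.

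First I would characterize linear factors of $f$ in $R$. Writing a putative factorization $t^n - d = (t - z)\sum_{i=0}^{n-1} b_i t^i$ and comparing coefficients via the commutation rule of $R$, an induction from the top coefficient down forces $b_{n-1} = 1$ and $b_{n-1-k} = \widetilde{\tau}(z)\widetilde{\tau}^2(z)\cdots\widetilde{\tau}^k(z)$ for $1 \le k \le n-1$, while matching the constant term yields
\[ d = z\widetilde{\tau}(z)\widetilde{\tau}^2(z)\cdots\widetilde{\tau}^{n-1}(z). \]
Therefore the displayed inequality in (i) is equivalent to ``$f$ has no degree-one factor in $R$''.

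For (i), $n$ is prime, so any proper factorization $f = gh$ has $\deg g + \deg h = n$ with both degrees in $\{1,\dots,n-1\}$. If $n = 2$ any proper factor is linear, and if $n = 3$ the pair of degrees must be $(1,2)$ or $(2,1)$, so a linear factor exists; in both cases reducibility of $f$ is equivalent to having a linear factor, and (i) follows at once from the previous step and Theorem \ref{thm:skew}. For prime $n \ge 5$ the degrees of the factors can lie strictly between $1$ and $n-1$, so the reduction is non-trivial; here the hypothesis that $F_0$ contains a primitive $n$-th root of unity $\zeta$ enters. In this setting Petit \cite{P66} (see also \cite{P14}) proves that $f = t^n - d$ is reducible in $R$ if and only if it already admits a linear factor, and combining this with the linear-factor characterization yields (i).

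For (ii), $n = 4$ is composite, so in addition to the linear obstruction one must rule out a degree $(2,2)$-factorization $f = (t^2 + a_1 t + a_0)(t^2 + z_1 t + z_0)$. Expanding this product in $R$ via $t^k x = \widetilde{\tau}^{-k}(x) t^k$, equating to $t^4 - d$, and solving the $t^3$- and $t^2$-coefficient equations for $a_1$ and $a_0$ in terms of $z_0, z_1$, the remaining $t^1$- and $t^0$-coefficient equations become, after applying $\widetilde{\tau}^2$ to eliminate inverse twists, exactly
\[ \widetilde{\tau}^2(z_1)\widetilde{\tau}^3(z_1)z_1 + \widetilde{\tau}^2(z_0)z_1 + \widetilde{\tau}^2(z_1)\widetilde{\tau}^3(z_0) = 0 \]
and $\widetilde{\tau}^2(z_0)z_0 + \widetilde{\tau}^2(z_1)\widetilde{\tau}^3(z_0) z_0 = d$. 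Simultaneous solvability of both equations is therefore equivalent to the existence of a quadratic factorization, and its negation is the ``or'' clause in (ii). Together with the linear-factor step and Theorem \ref{thm:skew} this gives (ii), matching the treatment in \cite{B14}. The main obstacle is the prime case $n \ge 5$ in (i): it is not automatic that reducibility of $f$ implies the existence of a linear factor, and the primitive root of unity hypothesis plus the cited result from \cite{P66} is precisely what closes this gap; everything else is bookkeeping with the skew polynomial division algorithm.
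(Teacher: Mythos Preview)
Your proposal is correct and follows the same route as the paper, which simply states that Theorem~\ref{thm:skew} together with the irreducibility criteria of \cite{P66} and \cite{B14} (cf.\ \cite{P14}) yields the result; you have merely unpacked those citations by exhibiting the linear-factor characterization and, for $n=4$, the quadratic-factor equations explicitly. One small point worth tightening: in the $n=3$ case your argument produces a linear factor on one side (left or right) depending on how the degrees split, whereas your coefficient computation is done for a specific side, so you should either note that the two conditions $d=z\widetilde\tau(z)\widetilde\tau^2(z)$ and $d=\widetilde\tau(z)\widetilde\tau^2(z)z$ define the same subset of $D$ (a short substitution), or simply fold $n=3$ into the Petit citation as the paper implicitly does.
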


From Theorem \ref{thm:skew} we obtain:

\begin{theorem} (equivalent to \cite[Theorems 20, 21]{P14})
 Let $F_0$ be of characteristic not 2 and $d\in F\setminus F_0$.
\\ (i) If $D=(a,c)_{F_0}\otimes_{F_0}F$ is a division algebra over $F$,
then
${\rm It}^2(D,\tau,d)$ is a division algebra.
\\ (ii) Let $F=F_0(\sqrt{b})$.
 Let $D_0=(L/F_0,\sigma,c)$ be a cyclic algebra of degree $3$
  such that $D=D_0\otimes_{F_0}F$ is a division algebra over $F$. If $d=d_0+\sqrt{b}d_1\in F\setminus F_0$
with $d_0,d_1\in F_0$, such that
 $3d_0^2+bd_1^2\not=0$,  then
 ${\rm It}^2(D,\tau,d)$ is a division algebra.
 \\ In particular, if $F_0=\mathbb{Q}$ and $b>0$, or if $b<0$ and $-\frac{b}{3}\not\in\mathbb{Q}^{\times 2}$
  then ${\rm It}^2(D,\tau,d)$ is a division algebra.
 \end{theorem}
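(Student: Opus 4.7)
The plan is to apply Theorem~\ref{mainthm}(i) with $n=2$, which is prime: the algebra ${\rm It}^2(D,\tau,d)$ is a division algebra if and only if $d\neq z\tl(z)$ for all $z\in D$. So I would assume $z\tl(z)=d$ for some $z\in D$ and derive a contradiction in each case.

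First I would set up a common decomposition. Since $d\in F\setminus F_0$ and $F_0=F\cap L$, the restriction $\tau|_F$ is a nontrivial involution with fixed field $F_0$, giving $F=F_0(\sqrt{b})$ for some $b\in F_0^\times$. In both (i) and (ii) we have $D=D_0\otimes_{F_0}F=D_0\oplus D_0\sqrt{b}$, and because $\tl$ is $L$-linear and fixes the natural $F_0$-generators of $D_0$, it acts by $\tl(z_0+z_1\sqrt{b})=z_0-z_1\sqrt{b}$. Expanding $z\tl(z)=d$ with $z=z_0+z_1\sqrt{b}$, $z_0,z_1\in D_0$, yields
\[
z_0^2-bz_1^2+(z_1z_0-z_0z_1)\sqrt{b}=d_0+d_1\sqrt{b},
\]
which decouples into $z_0^2-bz_1^2=d_0$ and $[z_1,z_0]=d_1$ in $D_0$.

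For (i), $D_0$ is a quaternion algebra, so the reduced trace of the commutator $[z_1,z_0]$ vanishes while ${\rm Trd}_{D_0/F_0}(d_1)=2d_1$; in characteristic $\neq 2$ this forces $d_1=0$, contradicting $d\notin F_0$. For (ii) the analogous trace computation gives only $3d_1=0$, which can fail in characteristic $3$, so I would instead apply the reduced norm $N_{D/F}$ to $z\tl(z)=d$. Using the remark in Section~2.3 that $N_{D/F}(\tl(z))=\tau(N_{D/F}(z))$ and setting $w=N_{D/F}(z)\in F$, I get $d^3=N_{D/F}(d)=w\tau(w)=N_{F/F_0}(w)\in F_0$. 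A direct expansion gives
\[
d^3=d_0(d_0^2+3bd_1^2)+d_1(3d_0^2+bd_1^2)\sqrt{b},
\]
so $d^3\in F_0$ forces $d_1(3d_0^2+bd_1^2)=0$, contradicting $d_1\neq 0$ together with the hypothesis $3d_0^2+bd_1^2\neq 0$. For the final clause, if $F_0=\mathbb{Q}$ and $b>0$ then $3d_0^2+bd_1^2>0$ whenever $d_1\neq 0$; and if $b<0$ with $-b/3\not\in\mathbb{Q}^{\times 2}$, vanishing of $3d_0^2+bd_1^2$ would give $(d_0/d_1)^2=-b/3$, contradicting the hypothesis on $b$.

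The main conceptual step is picking the right invariant: the $\mathbb{Z}/2$-grading $D=D_0\oplus D_0\sqrt{b}$ with $\tl$ acting as $-1$ on the odd part reduces the question to identities inside $D_0$ and $F_0$. For (i) the trace of a commutator is the cleanest obstruction; for (ii) one must pass to the reduced norm so that the problem becomes a cubic identity in $F_0(\sqrt{b})$, and it is precisely here that the hypothesis $3d_0^2+bd_1^2\neq 0$ is forced on us.
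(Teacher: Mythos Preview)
Your argument is correct and follows essentially the same route as the paper: the paper derives this theorem from Theorem~\ref{thm:skew} (irreducibility of $t^2-d$ in $D[t;\tl^{-1}]$), deferring the concrete verification to \cite{P14}; for $n=2$ that irreducibility criterion is exactly Theorem~\ref{mainthm}(i), which is your starting point. Your reduced-trace argument for (i) and reduced-norm argument for (ii) are the natural computations behind that reference. Note in particular that your norm step for (ii) is precisely Proposition~\ref{prop:nec} with $n=2$, $m=3$ made explicit: you are checking $d^3\notin F_0$, and the expansion of $d^3$ is what produces the hypothesis $3d_0^2+bd_1^2\neq 0$.

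One small point worth making explicit in (i): your claim that $\tl$ fixes $D_0$ elementwise amounts to $F_0(\sqrt{a})\subset L$, i.e.\ that the maximal subfield of $D_0$ is $L$. This is the intended reading, parallel to (ii) where the paper writes $D_0=(L/F_0,\sigma,c)$; and in any case it can always be arranged, since $(a,c)_F\cong(ab,c)_F$ lets you replace $a$ by $ab$ so that the maximal subfield becomes $F_0(\sqrt{ab})$ if needed. With that understood, your grading $D=D_0\oplus D_0\sqrt{b}$ and the action $\tl(z_0+z_1\sqrt{b})=z_0-z_1\sqrt{b}$ are justified, and the rest is straightforward.
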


\begin{proposition} \label{prop:nec} \cite{P14}
Suppose that $n$ is prime and in case $n\not=2,3$, additionally that $F_0$ contains a primitive $n$th root of unity.
  If $\tau(d^m)\not=d^m$, then ${\rm It}^n(D,\tau,d)$ is a division algebra.
\end{proposition}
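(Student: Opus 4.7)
The plan is to prove this by contrapositive, reducing to the criterion already established in Theorem \ref{mainthm}(i). That theorem says ${\rm It}^n(D,\tau,d)$ is division exactly when $d$ cannot be written as a ``twisted norm'' $z\tl(z)\tl^2(z)\cdots\tl^{n-1}(z)$ for any $z\in D$. So the goal is to show that any such representation forces $\tau(d^m)=d^m$.

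Thus I would begin by assuming, toward the contrapositive, that there exists $z\in D$ with
\[ d = z\,\tl(z)\,\tl^2(z)\cdots\tl^{n-1}(z). \]
Next I would apply the reduced norm $N_{D/F}$ to both sides. Since $c\in F_0\subseteq L={\rm Fix}(\tau)$, the map $\tl$ is an $F_0$-algebra automorphism of $D$ (noted at the start of Section 3), so in particular $N_{D/F}$ is multiplicative on the product. Combined with the identity $N_{D/F}(\tl(y))=\tau(N_{D/F}(y))$ from the Remark in Section 2.3, this gives
\[ N_{D/F}(d) = \prod_{i=0}^{n-1} \tau^i\bigl(N_{D/F}(z)\bigr) = N_{K/L}\bigl(N_{D/F}(z)\bigr) \in L, \]
the second equality being just the definition of the norm of $K/L$ applied to the element $N_{D/F}(z)\in K$.

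Now I would use the hypothesis $d\in F^\times$ of Section 3.1, which forces $N_{D/F}(d)=d^m$. Combining with the previous display yields $d^m\in L\cap F=F_0$, and since $\tau$ fixes $L\supseteq F_0$ pointwise we conclude $\tau(d^m)=d^m$. This contradicts the standing assumption $\tau(d^m)\neq d^m$, so no such $z$ exists, and Theorem \ref{mainthm}(i) then delivers the conclusion that ${\rm It}^n(D,\tau,d)$ is a division algebra.

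There is no substantial obstacle: the work has already been done in Theorem \ref{mainthm}(i) and in the compatibility $N_{D/F}\circ\tl=\tau\circ N_{D/F}$. The only point that needs a moment's care is to verify that $N_{D/F}$ is indeed multiplicative on this particular product (which requires $\tl$ to be a ring homomorphism, hence the hypothesis $c\in F_0$) and to recognize the resulting iterated application of $\tau$ as $N_{K/L}$, which lands in $L$ automatically.
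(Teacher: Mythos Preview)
Your proof is correct and matches the paper's approach: the paper does not prove Proposition \ref{prop:nec} directly (it simply cites \cite{P14}), but the very same argument---apply $N_{D/F}$ to a hypothetical factorization $d=z\tl(z)\cdots\tl^{n-1}(z)$, use $N_{D/F}\circ\tl=\tau\circ N_{D/F}$, and conclude $d^m\in F_0$---is exactly what the paper carries out in the proof of the immediately following Proposition \ref{prop:6}. One cosmetic remark: the multiplicativity of $N_{D/F}$ does not depend on $\tl$ being a ring homomorphism (the reduced norm is always multiplicative on $D$); the role of $c\in F_0$ is only to ensure the identity $N_{D/F}(\tl(y))=\tau(N_{D/F}(y))$.
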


Note that this generalizes \cite[Proposition 13]{MO13}.

\begin{proposition}  \label{prop:6}
Suppose that $n$ is prime and in case $n\not=2,3$, additionally that $F_0$ contains a primitive $n$th root of unity.
 If $d\in F$ such that $d^m\not\in N_{D/F_0}(D^\times)$, then ${\rm It}^n(D,\tau,d)$
 is a division algebra. In particular, for all $d\in F\setminus F_0$ with $d^m\not\in F_0$, ${\rm It}^n(D,\tau,d)$
 is a division algebra.
\end{proposition}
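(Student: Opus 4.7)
The plan is to reduce to Theorem~\ref{mainthm}(i) and then carry out a short norm calculation. Under the stated hypotheses on $n$ and $F_0$, Theorem~\ref{mainthm}(i) tells us that ${\rm It}^n(D,\tau,d)$ fails to be division iff the equation
\[
d = z\,\tl(z)\,\tl^2(z)\cdots \tl^{n-1}(z)
\]
admits a solution $z\in D^\times$. So I argue by contradiction, assuming such a $z$ exists, and apply the reduced norm $N_{D/F}$ to both sides.

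Using the multiplicativity of $N_{D/F}$ together with the identity $N_{D/F}(\tl(y))=\tau(N_{D/F}(y))$ recorded in the Remark of Section~2.3, this yields
\[
N_{D/F}(d)=\prod_{i=0}^{n-1}\tau^i(w), \qquad w:=N_{D/F}(z)\in F^\times.
\]
Because $d\in F^\times$ lies in the center of $D$, the left-hand side equals $d^m$. The next step is to rewrite the product on the right as a norm down to $F_0$. Since $\sigma$ and $\tau$ commute, $\tau$ preserves $F={\rm Fix}(\sigma)$, and $\tau|_F$ has fixed field $F_0$, so its order $k:=[F:F_0]$ divides $n$. Writing $n=k\ell$, the powers $\tau^0|_F,\ldots,\tau^{n-1}|_F$ cycle through the group $\langle\tau|_F\rangle$ exactly $\ell$ times each, and the product collapses to $N_{F/F_0}(w)^\ell$. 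Transitivity of the reduced norm gives $N_{F/F_0}(w)=N_{F/F_0}(N_{D/F}(z))=N_{D/F_0}(z)$, so
\[
d^m = N_{D/F_0}(z)^\ell = N_{D/F_0}(z^\ell) \in N_{D/F_0}(D^\times),
\]
contradicting the hypothesis. This proves the first assertion.

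The ``in particular'' statement then follows at once from the inclusion $N_{D/F_0}(D^\times)\subseteq F_0$: any $d\in F$ with $d^m\notin F_0$ automatically satisfies $d^m\notin N_{D/F_0}(D^\times)$, and is automatically in $F\setminus F_0$ since otherwise $d^m\in F_0$.

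The only point that warrants genuine care is the passage from $\prod_{i=0}^{n-1}\tau^i(w)$ to a power of $N_{F/F_0}(w)$, since $[F:F_0]$ need not equal $n$. But once one notices that $\sigma\tau=\tau\sigma$ forces $\tau(F)=F$ and so makes $\tau|_F$ an automorphism of order dividing $n$, this is a purely bookkeeping argument about cyclic orbits. No deeper obstacle is expected beyond having Theorem~\ref{mainthm}(i) available; everything else is a routine reduced-norm manipulation.
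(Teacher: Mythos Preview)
Your proof is correct and follows essentially the same route as the paper: contrapose via Theorem~\ref{mainthm}(i), apply $N_{D/F}$, use $N_{D/F}(\tl(y))=\tau(N_{D/F}(y))$, and identify the resulting product as lying in $N_{D/F_0}(D^\times)$. You are in fact slightly more careful than the paper's own argument, which tacitly writes $\prod_{i=0}^{n-1}\tau^i(a)=N_{F/F_0}(a)$ as if $[F:F_0]=n$; your observation that $k=[F:F_0]$ merely divides $n$ and that the product is $N_{F/F_0}(w)^\ell=N_{D/F_0}(z^\ell)$ handles the general case cleanly.
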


\begin{proof}
Since $c\in F_0\subset {\rm Fix}(\tau)=L$ we have
$N_{D/F}(\widetilde{\tau}(x))=\tau(N_{D/F}(x))$ for all $x\in D$ by \cite[Proposition 4]{P13.2}.
Assume that $d=z\tl(z)\cdots\tl^{n-1}(z)$, then
$$N_{D/F}(d)=N_{D/F}(z)N_{D/F}(\tl(z))\cdots N_{D/F}(\tl^{n-1}(z))=N_{D/F}(z) \tau(N_{D/F}(z))\cdots \tau^{n-1}
(N_{D/F}(z)).$$
Put $a=N_{D/F}(z)$, note that $a \tau(a)\cdots\tau^{n-1}(a)=N_{F/F_0}(a)=N_{F/F_0}(N_{D/F}(z))=N_{D/F_0}(z)\in F_0$,
 and use that $N_{D/F}(d)=d^m$ for $d\in F$.
\end{proof}

\section{General iteration process III: Natarajan and Rajan's  algebras}

\subsection{}
We use the same setup and notation as in Section 3 and now formally define the algebra behind the codes in~\cite{R13}.

\begin{definition} (B. S. Rajan and L. P. Natarajan)
Pick $d \in D^\times$.
 Define a multiplication on the right $D$-module
\[ D \oplus fD \oplus f^2D \oplus \cdots \oplus f^{n-1}D,\]
via the rules
\[
 (f^i x)(f^j y) =
  \begin{cases}
   f^{i+j} \tl^j(x)y & \text{if } i+j < n \\
   f^{(i+j)-n} \tl^j(x)yd  & \text{if } i+j \geq n
  \end{cases}
\]
for all $x, y \in D$, $i,j<n$, and call the resulting algebra ${\rm It}_R^n(D, \tau, d) $.
\end{definition}

 ${\rm It}_R^n(D,\tau,d)$ is an algebra over $F_0$ of dimension $nm^2[F:F_0]$ with unit element
 $1\in D$, contains $D$ as a subalgebra, and $f^{n-1}f=d=ff^{n-1}$.  For $d\in F^\times$,
 $${\rm It}_R^n(D,\tau,d)={\rm It}^n(D,\tau,d)={\rm It}_M^n(D,\tau,d).$$
If $d\in F_0$ and $F\not=L$ then ${\rm It}_R^n(D,\tau,d)={\rm It}^n(D,\tau,d)={\rm It}_M^n(D,\tau,d)$ is an
associative $F_0$-algebra, cf. \cite[Remark 1]{R13}. ${\rm It}_R^2(D,\tau,d)={\rm It}_R(D,\tau,d)$ is an iterated algebra.

\begin{lemma}\label{nuclem2}
Let $A={\rm It}_R^n(D,\tau,d)$.
\\ (i) If $d\not\in F_0$ then $D={\rm Nuc}_m(A)={\rm Nuc}_l(A)$.
\\ (ii) Let $F'$ and $L'$ be fields and let $K'$ be a cyclic extension of both $F'$ and $L'$ such that
 $Gal(K'/F) = \langle \sigma' \rangle$ and $[K:F] = m'$,
 $Gal(K'/L') = \langle \tau' \rangle$ and $[K:L] = n'$,
 $\sigma'$ and $\tau'$ commute.
Assume $F_0=F'\cap L'$.
Let $D'=(K'/F', \sigma', c')$ be a cyclic division algebra over $F'$ of degree $m'$, $c'\in F_0$.
 If ${\rm It}_R^n(D, \tau, d)\cong {\rm It}_R^{n'}(D', \tau', d')$  then $D\cong D'$ and thus also $F\cong F'$, $m=m'$
and $n=n'$.
\\  (iii) If $d\in K^\times$, then the (associative or nonassociative) cyclic algebra $(K/L,\tau,d)$ of degree $n$, viewed as algebra over $F_0$,
 is a subalgebra of $A$.
\\ (iv) For $n>3$, $n$ even,  ${\rm It}_R(D,\tau,d)$ is isomorphic to a proper subalgebra of $It_R^n(D, \tau, d)$.
\\ (v) $A\otimes_F K\cong{\rm Mat}_m(K)\oplus f{\rm Mat}_m(K)\oplus\dots\oplus f^{n-1}{\rm Mat}_m(K)$
 contains the $F_0$-algebra
${\rm Mat}_m(K)$ as subalgebra and has zero divisors.
\end{lemma}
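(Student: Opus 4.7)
The plan is to derive (i) by identifying ${\rm It}_R^n(D,\tau,d)$ with a left-handed Petit algebra over a twisted polynomial ring, parallel to Theorem \ref{thm:skew}. Parts (ii)--(v) then follow from (i) together with short direct computations.

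For (i), I would first exhibit ${\rm It}_R^n(D,\tau,d) \cong R/R(t^n - d)$, where $R = D[t;\tl^{-1}]$ and $R(t^n - d)$ denotes the \emph{left} ideal generated by $t^n - d$. Since $c \in F_0$ forces $\tl^n = \mathrm{id}$, the element $t^n$ is central in $R$, and reducing a product of monomials $t^{i+j}\tl^j(x)y$ modulo $R(t^n-d)$ produces exactly the rule $f^{(i+j)-n} \tl^j(x)\,y\,d$ with $d$ on the right. Next I would apply the left--right dual of Petit's structure theorem \cite[(2)]{P66}: for $g(t) = t^n - d$, the left and middle nuclei of $R/R(t^n-d)$ coincide with $D$ whenever $g$ is not two-sided in $R$, and two-sidedness would force $d$ to be both central in $D$ (i.e.\ $d \in F$) and fixed by $\tl$ (i.e.\ $d \in L$), hence $d \in F_0$. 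Thus $d \notin F_0$ forces ${\rm Nuc}_l(A) = {\rm Nuc}_m(A) = D$.

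For (ii), any algebra isomorphism preserves the middle nucleus, so by (i) it restricts to an $F_0$-algebra isomorphism $D \cong D'$; comparing centers gives $F \cong F'$, comparing degrees gives $m = m'$, and comparing $F_0$-dimensions then forces $n = n'$. For (iii), restricting the multiplication of $A$ to $K \oplus fK \oplus \cdots \oplus f^{n-1}K$ yields exactly the (associative if $d \in L$, nonassociative if $d \in K\setminus L$) cyclic algebra $(K/L,\tau,d)$, since $\tl|_K = \tau$ and elements of $K$ commute with $d \in K$. For (iv), writing $n = 2s$ with $s \geq 2$, a direct multiplication check shows that $B = D \oplus f^s D$ is closed under the multiplication of $A$ and the induced structure agrees with ${\rm It}_R(D,\tl^s,d)$; properness follows from $\dim_{F_0} B = 2m^2[F:F_0] < nm^2[F:F_0] = \dim_{F_0} A$ since $n > 3$. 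For (v), the regular representation $D \otimes_F K \cong {\rm Mat}_m(K)$ together with the right $D$-module decomposition $A = \bigoplus_i f^i D$ produces the claimed additive decomposition, and matrix units in the ${\rm Mat}_m(K)$-summand provide zero divisors.

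The main obstacle is making the identification in (i) precise: the standard Petit quotient $R/(t^n - d)R$ places $d$ on the \emph{left} and recovers ${\rm It}^n(D,\tau,d)$, whereas ${\rm It}_R^n(D,\tau,d)$ requires the opposite-sided quotient $R/R(t^n - d)$ and the corresponding left-handed version of Petit's nucleus statement. Once this is set up correctly, the remaining parts reduce to routine multiplication-table verifications that closely mirror the arguments already given for ${\rm It}^n$ in Lemma \ref{lem:lem3} and Lemma \ref{nuclem}.
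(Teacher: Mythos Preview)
Your approach is essentially the paper's: it proves (i) and (ii) by the argument of Lemma~\ref{nuclem} (Petit's nucleus result \cite[(2)]{P66} via the $S_f$ identification) and (iii)--(v) by the direct checks of Lemma~\ref{lem:lem3}.

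Your ``obstacle'' is illusory, however. The paper's $S_f$ is defined via the right-division remainder $g=qf+r$, and since $c\in F_0$ gives $\tl^n=\mathrm{id}$, the element $t^n$ is central in $R=D[t;\tl^{-1}]$; reducing $t^{i+j}\tl^j(x)y$ with $q=t^{(i+j)-n}\tl^j(x)y$ then yields remainder $q\,d=t^{(i+j)-n}\tl^j(x)y\,d$. Thus the \emph{standard} Petit construction already equals ${\rm It}_R^n(D,\tau,d)$ (this is exactly what Theorem~\ref{thm:skew2} records), and no left-handed dual is needed. The statement ${\rm It}^n=S_f$ in Theorem~\ref{thm:skew} is made only under the standing hypothesis $d\in F^\times$ of that subsection, where ${\rm It}^n={\rm It}_R^n$ anyway, so there is no conflict. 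Your quotient $R/R(t^n-d)$ is the standard one, and your two-sidedness analysis (forcing $d\in F\cap L=F_0$) is correct.

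Incidentally, your identification of the subalgebra in (iv) as ${\rm It}_R(D,\tl^s,d)$ rather than ${\rm It}_R(D,\tau,d)$ agrees with Lemma~\ref{lem:lem3}(iii) and is what the computation actually gives; the $\tau$ in the statement of (iv) appears to be a typo for $\tau^s$.
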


The proofs of (i) and (ii) are analogous to the one of Lemma \ref{nuclem}, the ones of
(iii), (iv), (v) to the ones in Lemma \ref{lem:lem3}.

\begin{theorem} (i) \label{thm:skew2}
 ${\rm It}_R^n(D,\tau,d)$
is a division algebra if and only if the polynomial $$f(t)=t^n-d$$ is irreducible in the twisted polynomial ring
$D[t;\widetilde{\tau}^{-1}]$.
\\ (ii) Suppose that $n$ is prime and in case $n\not=2,3$, additionally that $F_0$ contains a primitive $n$th root of unity.
 Then ${\rm It}_R^n(D,\tau,d)$ is a division algebra if and only if
$$d\not=z\widetilde{\tau}(z)\tl^2(z)\cdots \widetilde{\tau}^{n-1}(z) $$
 for all $z\in D$.
 \\ (iii)  (cf. \cite{B14}) ${\rm It}_R^4(D,\tau,d)$ is a division algebra
if and only if
$$d\not=z\widetilde{\tau}(z)\widetilde{\tau}^2(z)\widetilde{\tau}^3(z)$$
and
$$ \widetilde{\tau}^2(z_1)\widetilde{\tau}^3(z_1)z_1+\widetilde{\tau}^2(z_0)z_1+\widetilde{\tau}^2(z_1)
\widetilde{\tau}^3(z_0)\not=0 \text{ or } \widetilde{\tau}^2(z_0)z_0+\widetilde{\tau}^2(z_1)\widetilde{\tau}^3(z_0)z_0\not=d$$
 for all $z,z_0,z_1\in D$.
 \\ (iv) \label{prop:nec1}
 Suppose that $n$ is prime and  in case $n\not=2,3$, additionally that $F_0$ contains a primitive $n$th root of unity. Let $d\in K\setminus L$.
If $\tau(d^m)\not=d^m$, then ${\rm It}_R^n(D,\tau,d)$ is a division algebra.
\end{theorem}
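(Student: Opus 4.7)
My plan is to apply the criterion of Theorem~\ref{thm:skew2}(ii), which under the stated hypotheses says that ${\rm It}_R^n(D,\tau,d)$ is a division algebra if and only if $d\neq z\tl(z)\tl^2(z)\cdots\tl^{n-1}(z)$ for every $z\in D$. I argue by contradiction: assume that some $z\in D^\times$ satisfies $d=z\tl(z)\cdots\tl^{n-1}(z)$, and aim to deduce that $\tau(d^m)=d^m$, contradicting the hypothesis.

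Two facts drive the computation. Since $c\in F_0\subset L$, by Remark~2(i) the map $\tl$ is a ring automorphism of $D$ of order $n$, and by Remark~2(ii), $N_{D/F}(\tl(x))=\tau(N_{D/F}(x))$ for all $x\in D$. Taking $N_{D/F}$ of the displayed equation and writing $a=N_{D/F}(z)\in F^\times$ gives $N_{D/F}(d)=a\tau(a)\cdots\tau^{n-1}(a)$, an expression cyclically $\tau$-invariant (because $\tau^n=\mathrm{id}$) and therefore lying in $F\cap L=F_0$; since $d\in K$, $N_{D/F}(d)=N_{K/F}(d)$, so $N_{K/F}(d)\in F_0$. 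Separately, the multiplicativity of $\tl$ together with $\tl^n(z)=z$ lets one rewrite the defining equation as $dz=z\tl(d)$, and since $d\in K$ we have $\tl(d)=\tau(d)$, hence $z^{-1}dz=\tau(d)$; raising to the $m$-th power inside the commutative subfield $K$ gives $z^{-1}d^mz=\tau(d^m)$, so $d^m$ and $\tau(d^m)$ are conjugate in $D^\times$.

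The proof then splits according to whether $d\in F$. If $d\in F$, then $d^m$ lies in the center of $D$, so $z^{-1}d^mz=d^m$ and the previous identity immediately yields $\tau(d^m)=d^m$, the desired contradiction. If instead $d\in K\setminus F$, then primality of $m$ gives $F(d)=K$, so $z$ normalizes $K$ in $D^\times$; by the Skolem--Noether theorem, $z=ke^j$ for some $k\in K^\times$ and $1\leq j\leq m-1$, and correspondingly $\tau(d)=\sigma^j(d)$, so $\tau(d^m)=\sigma^j(d^m)$. The main obstacle is this latter case: one must combine the norm constraint $N_{K/F}(d)\in F_0$, the explicit form of $z$, the primality of $n$, and (for $n\neq 2,3$) the primitive $n$-th root of unity in $F_0$ to force $\sigma^j(d^m)=d^m$, completing the contradiction. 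This last step is where the full strength of the hypotheses enters and is the delicate part of the argument, paralleling the proof of Proposition~\ref{prop:nec} for ${\rm It}^n(D,\tau,d)$ with $d\in F$ but requiring the extra analysis of conjugation in $D$ when $d\in K\setminus F$.
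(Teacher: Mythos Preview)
Your proposal addresses only part (iv), attempting a direct argument via part (ii). The paper itself does not give a self-contained proof of (iv); it simply records that (ii)--(iv) follow from (i) together with results in \cite{P66}, \cite{B14}, \cite{P14}. So there is no detailed argument in the paper to compare against, and your strategy of reducing (iv) to (ii) is reasonable in principle. The derivation of $z^{-1}dz=\tau(d)$ from $d=z\tl(z)\cdots\tl^{n-1}(z)$ (using that $\tl$ is multiplicative because $c\in F_0$) is correct, and so is the consequence $z^{-1}d^{m}z=\tau(d^{m})$; the subcase $d\in F$ is handled cleanly.

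However, there is a genuine gap in the remaining case. You invoke ``primality of $m$'' to conclude $F(d)=K$ and hence that $z$ normalizes $K$, but nowhere in the hypotheses is $m$ assumed prime: the only primality assumption is on $n=[K:L]$. Without $F(d)=K$ you cannot deduce that conjugation by $z$ restricts to an automorphism of $K$, so the Skolem--Noether step yielding $z=ke^{j}$ is unjustified. One can still reach the weaker conclusion $\tau(d^{m})=\sigma^{j}(d^{m})$ for some $j$ by a different route (two elements of $K$ conjugate in $D$ share the same reduced characteristic polynomial over $F$, hence lie in the same $\sigma$-orbit), but this does not require your normalizer argument and, more importantly, does not by itself yield a contradiction.

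That brings up the second problem: even granting $\tau(d^{m})=\sigma^{j}(d^{m})$, you explicitly leave the ``delicate part'' unfinished, asserting only that the norm constraint, the primality of $n$, and the root-of-unity hypothesis should combine to force $j\equiv 0$. You give no mechanism for this, and the parallel you draw with Proposition~\ref{prop:nec} is misleading: there $d\in F$, so $N_{D/F}(d)=d^{m}$ and the conclusion $N_{D/F}(d)\in F_0$ \emph{is} the desired statement $\tau(d^{m})=d^{m}$. For $d\in K\setminus F$ one has $N_{D/F}(d)=N_{K/F}(d)\neq d^{m}$ in general, so the norm argument no longer lands on $d^{m}$. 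As written, the proposal is a sketch that stops precisely at the nontrivial step, and it is not clear that this line of attack can be completed without importing the irreducibility criteria from \cite{P66}, \cite{B14} that the paper cites.
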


\begin{proof}
(i) Let  $R=D[t;\widetilde{\tau}^{-1}]$ and $f(t)=t^n-d\in R$. Since
${\rm It}_R^n(D,\tau,d)= S_f$
\cite{P14},  the assertion now follows as in the proof of Theorem \ref{thm:skew}.
\\ (ii), (iii) and (iv) follow from (i) together with the improvements of the conditions \cite[(18)(19)]{P66} given in
\cite{B14}, cf. \cite{P14}.
\end{proof}

\begin{proposition}  \label{prop:6.2}
 Suppose that $n$ is prime and in case $n\not=2,3$, additionally  that $F_0$ contains a primitive $n$th root of unity.
\\ (i)  If $N_{D/F}(d)\not\in N_{D/F_0}(D^\times)$
then ${\rm It}_R^n(D,\tau,d)$ is a division algebra.
\\ (ii)  ${\rm It}_R^n(D,\tau,d)$ is division for all $d\in D$ such that
$N_{D/F}(d)\not\in F_0$.

\end{proposition}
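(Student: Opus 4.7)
The plan is to mirror the approach used in the proof of Proposition \ref{prop:6}, but now starting from the criterion for $\mathrm{It}_R^n(D,\tau,d)$ given in Theorem \ref{thm:skew2}(ii) instead of Theorem \ref{mainthm}(i).

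For part (i), I would argue by contrapositive. Assume that $\mathrm{It}_R^n(D,\tau,d)$ is \emph{not} division. Under the standing hypotheses ($n$ prime, plus a primitive $n$th root of unity in $F_0$ when $n \neq 2,3$), Theorem \ref{thm:skew2}(ii) gives an element $z \in D$ with
\[
d \;=\; z\,\widetilde{\tau}(z)\,\widetilde{\tau}^2(z)\,\cdots\,\widetilde{\tau}^{n-1}(z).
\]
Since $d \in D^\times$ and $D$ is division, $z$ must be nonzero. Now apply the reduced norm $N_{D/F}$; using the identity $N_{D/F}(\widetilde{\tau}(x)) = \tau(N_{D/F}(x))$ from Remark 2 (which applies because $c \in F_0 \subset \mathrm{Fix}(\tau)$), and setting $a = N_{D/F}(z) \in F^\times$, I obtain
\[
N_{D/F}(d) \;=\; a\,\tau(a)\,\tau^2(a)\,\cdots\,\tau^{n-1}(a).
\]
The key observation is that $\tau$ restricts to an automorphism of $F$: since $\sigma\tau = \tau\sigma$, $\tau$ preserves $F = \mathrm{Fix}(\sigma)$, and its fixed field on $F$ is $F \cap L = F_0$, so $\tau|_F$ generates $\mathrm{Gal}(F/F_0)$. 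Hence the right-hand side equals $N_{F/F_0}(a) = N_{F/F_0}(N_{D/F}(z)) = N_{D/F_0}(z)$, which lies in $N_{D/F_0}(D^\times)$. This contradicts $N_{D/F}(d) \notin N_{D/F_0}(D^\times)$, completing (i).

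Part (ii) is then immediate: the image $N_{D/F_0}(D^\times)$ is a subset of $F_0^\times$ (reduced norms from $D$ to $F_0$ land in the center $F_0$), so if $N_{D/F}(d) \notin F_0$ then a fortiori $N_{D/F}(d) \notin N_{D/F_0}(D^\times)$, and (i) applies.

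The only substantive step is the identification of $a\tau(a)\cdots\tau^{n-1}(a)$ with $N_{F/F_0}(N_{D/F}(z))$; this is essentially the same tower-of-norms identity already used for Proposition \ref{prop:6}, so no new obstacle arises. The main thing to be careful about is justifying that $\tau|_F$ really is a generator of $\mathrm{Gal}(F/F_0)$ of order $n$, which follows from $\sigma\tau = \tau\sigma$ together with $F_0 = F \cap L$.
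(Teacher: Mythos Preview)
Your proof is correct and follows essentially the same route as the paper's: both argue by contrapositive via Theorem \ref{thm:skew2}(ii), apply $N_{D/F}$ to the relation $d=z\widetilde{\tau}(z)\cdots\widetilde{\tau}^{n-1}(z)$ using $N_{D/F}(\widetilde{\tau}(x))=\tau(N_{D/F}(x))$, and then identify $a\tau(a)\cdots\tau^{n-1}(a)$ with $N_{F/F_0}(N_{D/F}(z))=N_{D/F_0}(z)$. Your added justification that $\tau|_F$ generates $\mathrm{Gal}(F/F_0)$ is a detail the paper leaves implicit.
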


\begin{proof}
(i) Since $c\in F_0\subset {\rm Fix}(\tau)=L$ we have
$N_{D/F}(\widetilde{\tau}(x))=\tau(N_{D/F}(x))$ for all $x\in D$ by \cite[Proposition 4]{P13.2}.
Assume that $d=z\tl(z)\cdots\tl^{n-1}(z)$ for some $z\in D$, then
$$N_{D/F}(d)=N_{D/F}(z)N_{D/F}(\tl(z))\cdots N_{D/F}(\tl^{n-1}(z))=N_{D/F}(z) \tau(N_{D/F}(z))\cdots \tau^{n-1}
(N_{D/F}(z)).$$
Put $a=N_{D/F}(z)$ and note that $a \tau(a)\cdots\tau^{n-1}(a)=N_{F/F_0}(a)=N_{F/F_0}(N_{D/F}(z))=N_{D/F_0}(z)\in F_0$.
\\ (ii) follows from (i).
\end{proof}

\begin{theorem} \label{thm:degree3}
Let $F_0$ have characteristic not 2.
Let $D=(e,c)_F$, $c\in F_0$, be a quaternion division algebra over $F$.
\\ (i) If $[K:L]=3$ and $d\in L\setminus F_0$ such that $d\not\in N_{K/L}(K^\times)$,
then
${\rm It}_R^3(D,\tau,d)$ is a division algebra.
\\ (ii) If $[K:L]=4$ and $d\in L\setminus F_0$ such that $d^s\not\in N_{K/L}(K^\times)$ for $t=1,2,3$, then
$d\not=z\widetilde{\tau}(z)\widetilde{\tau}^2(z)\widetilde{\tau}^3(z)$ for all $z\in D$.
\end{theorem}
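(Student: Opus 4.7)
I plan to treat both parts uniformly by reducing each to the single equational condition
\[
d\neq z\tl(z)\tl^2(z)\cdots\tl^{n-1}(z)\quad\text{for all }z\in D.
\]
For $n=3$ this characterizes ${\rm It}_R^3(D,\tau,d)$ as a division algebra by Theorem~\ref{thm:skew2}(ii) (note that $n=3$ lies in the exceptional list, so no primitive $n$th root of unity is needed), and for $n=4$ it is precisely the first of the two conditions in Theorem~\ref{thm:skew2}(iii). Only the $s=1$ case of the hypothesis, namely $d\notin N_{K/L}(K^\times)$, will actually be used.

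Suppose for contradiction that $d=z\tl(z)\tl^2(z)\cdots\tl^{n-1}(z)$ for some $z\in D^\times$. The central observation is that $d\in L={\rm Fix}(\tau)$ forces $\tl(d)=d$; applying $\tl$ to the equation and using $\tl^n={\rm id}$ yields
\[
d=\tl(d)=\tl(z)\tl^2(z)\cdots\tl^{n-1}(z)\cdot z.
\]
Setting $w:=\tl(z)\cdots\tl^{n-1}(z)=z^{-1}d$, comparison of the two identities shows $zw=wz$, which rearranges to $zd=dz$. So $z$ must lie in the centralizer $C_D(d)$.

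It remains to prove that $C_D(d)=K$, for then $z\in K$ gives $\tl^i(z)=\tau^i(z)$ and $d=N_{K/L}(z)$, contradicting the hypothesis $d\notin N_{K/L}(K^\times)$. Since $L\cap F=F_0$ and $d\in L\setminus F_0$, we have $d\notin F$ and hence $\sigma(d)\neq d$; since $d\in K$ and $[K:F]=2$, also $F(d)=K$. For an arbitrary $x=u+jv\in D=K\oplus jK$, the relation $dj=j\sigma(d)$ makes $xd=dx$ equivalent to $(\sigma(d)-d)v=0$, which forces $v=0$. Thus $C_D(d)=K$, as needed, and the proof is complete in both cases.

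I do not foresee a serious obstacle. The essential insight is that the assumption $d\in L$ transforms the cyclic product identity into a commutation relation; the quaternion structure of $D$ then traps $z$ inside the maximal subfield $K$, at which point the norm obstruction $d\notin N_{K/L}(K^\times)$ yields the contradiction. The $s=2,3$ conditions in (ii) are not used for the claim made, and presumably would be needed only to verify the second equational condition in Theorem~\ref{thm:skew2}(iii).
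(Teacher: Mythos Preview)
Your argument is correct and genuinely cleaner than the paper's. The paper also reduces to the single condition $d\neq z\tl(z)\cdots\tl^{n-1}(z)$ via Theorem~\ref{thm:skew2}, but then proceeds by brute-force computation: writing $z=a+jb$, ruling out $a=0$ or $b=0$, applying $\tl^2$ (a shift by two rather than one) to obtain $z\tl(z)\tl^2(z)=\tl^2(z)z\tl(z)$, expanding $z\tl(z)=x+j\sigma(y)$ with explicit formulas for $x,y$, and finally manipulating the resulting scalar equations until an expression lands in $F$, contradicting $d\in L\setminus F$. Your route bypasses all of this by noting that the single shift $\tl(d)=d$ already forces $zd=dz$, and then invoking the standard fact that the centralizer in a quaternion algebra of an element of $K\setminus F$ is $K$ itself. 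This is both shorter and conceptually transparent; it also makes visible why only the $s=1$ hypothesis in (ii) is needed for the stated conclusion (the paper leaves (ii) to the reader with the remark that it is ``a straightforward calculation analogous to (i)''). The only small points worth making explicit in a write-up are that $\tl$ is multiplicative because $c\in F_0\subset L$ (so $\tl$ is a ring automorphism), and that $d\in L\setminus F_0$ together with $L\cap F=F_0$ gives $d\notin F$, which is what drives $C_D(d)=K$.
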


\begin{proof}
(i) By Theorem \ref{thm:skew2} (ii), ${\rm It}_R^3(D,\tau,d)$ is a division algebra
if and only if
 $d\not=z\widetilde{\tau}(z)\widetilde{\tau}^2(z)$
 for all $z\in D$. Suppose that
 \begin{equation}\label{eqn:3}
 d=z\widetilde{\tau}(z)\widetilde{\tau}^2(z)
 \end{equation}
  for some $z=a+jb\in D$, $a,b\in K$, then
 $a\not=0$ and $b\not=0$: suppose $a=0$, then $jbj\widetilde{\tau}(b)j^2\widetilde{\tau}^2(b)\in Kj$ contradicts that
 $d\in L^\times$; suppose $b=0$, then $d=a\tau(a)\tau^2(a)=N_{K/L}(a)$ contradicts that $d\not\in N_{K/L}(K^\times)$.
Equation (\ref{eqn:3}) implies that $\widetilde{\tau}^2(d)=\widetilde{\tau}^2(z) z\widetilde{\tau}(z)$, since $d\in L$
therefore
$$z\widetilde{\tau}(z)\widetilde{\tau}^2(z)=\widetilde{\tau}^2(z)z\widetilde{\tau}(z).$$
Thus for $D=(e,c)_F$, $c\in F_0$,
$$z\widetilde{\tau}(z)=x+j\sigma(y)$$
with $x=a\tau(a)+c\sigma(b)\tau(b)$, $\sigma(y)=b\tau(a)+\sigma(a)\tau(b)$. From
$(x+j\sigma(y))(\tau^2(a)+j\tau^2(b))=(\tau^2(a)+j\tau^2(b))(x+j\sigma(y))$ it follows that
$$\sigma(x)\tau^2(b)+\sigma(y)\tau^2(a)=\tau^2(b)x+\sigma(\tau^2(a))\sigma(y).$$
Equation (\ref{eqn:3}) yields
 \begin{equation}\label{eqn:4}
d=x\tau^2(a)+cy\tau^2(b)
 \end{equation}
and
 \begin{equation}\label{eqn:5}
0=\sigma(x)\tau^2(b)+\sigma(y)\tau^2(a).
 \end{equation}
Now $x\not=0$ (or else we get a contradiction), so Equations (\ref{eqn:4}) and (\ref{eqn:5}) together with Equation (\ref{eqn:3}) imply that
$$\frac{\sigma(y)}{\tau^2(b)}=-\frac{x}{\sigma(\tau^2(a))}=\frac{\sigma(x)}{\tau^2(a)}$$
and
$$\frac{-\sigma(x)}{\tau^2(a)}=\frac{\sigma(y)}{\tau^2(b)},$$
so that
$$\frac{\sigma(x)}{\tau^2(a)}\in {\rm Fix}(\sigma)=F.$$
 Use Equation (\ref{eqn:5}) in Equation (\ref{eqn:4}) to obtain
$$\frac{\tau^2(a)}{\sigma(x)}(x\sigma(x)-cy\sigma(y))=d.$$
Since $x\sigma(x)-cy\sigma(y)\in F$, the left-hand side lies in $F$, contradicting the choice of $d\in L\setminus F$.
   Thus $d\not=z\widetilde{\tau}(z)\widetilde{\tau}(z)^2$.
   \\ (ii) The proof is a straightforward calculation analogous to (i) or the proof of \cite[Proposition 5]{R13}.
\end{proof}

\subsection{} $A={\rm It}_R^n(D, \tau, d)$ is  a right $K$-vector space of dimension $mn$.
 By choosing $d\in L^\times$ from now on, we achieve that left multiplication
 $L_x$ is a $K$-endomorphism and can be represented
by a matrix with entries in $K$.

For $d\in L^\times$,  the algebras $A={\rm It}_R^n(D,\tau,d)$ are behind the codes
defined by  Srinath and  Rajan \cite{R13}, even though they are not explicitly defined there as such.
In the setup of \cite{R13}, it is assumed that $d\in L\setminus F$ and that $L\not=F$.
 We do not assume that $L\not=F$ for now.

\begin{example}
Let $F_0$ have characteristic not 2 and  $D = (K/F, \sigma, c)=K \oplus eK$ be a quaternion division algebra over $F$ with
multiplication
\begin{equation}\label{eqn:1}
(x_0 + ex_1)(u_0 + eu_1) = \big(x_0u_0 + c \sigma(x_1)u_1 \big) + e\big(x_1u_0 + \sigma(x_0)u_1 \big),
\end{equation}
for $x_i, u_i \in K$. Let $K/L$ be a quadratic field extension with non-trivial automorphism $\tau$, $d \in K^\times$. The
iterated algebra
\[{\rm It}_R(D, \tau, d) = D \oplus fD  = K \oplus eK \oplus fK \oplus feK,\]
has multiplication
\[(x+fy)(u+fv) = \big( xu+\tl(y)vd \big) + f \big( yu + \tl(x)v \big),\]
where
$x = x_0 + ex_1$, $y = y_0 + ey_1$, $u = u_0 + eu_1,$ $v = v_0 + ev_1\in D, $ $x_i,y_i,u_i,v_i\in K$.
Here,
$$xu \text{ is given in equation } (\ref{eqn:1}),$$
\begin{align*}
\tl(y)vd = (\tl(y)v)d & = \Big(\big( \tau(y_0) v_0 + c \sigma \tau(y_1)v_1 \big) + e \big(\tau(y_1)v_0 + \sigma \tau(y_0)v_1 \big) \Big) (d+ e0)\\
&=\big( \tau(y_0) v_0 d+ c \sigma \tau(y_1)v_1 d\big) + e \big(\tau(y_1)v_0 d+ \sigma \tau(y_0)v_1d \big),
\end{align*}
\begin{align*}
yu &= \big( y_0 u_0 + c \sigma(y_1)u_1 \big) + e \big( y_1u_0 + \sigma(y_0)u_1 \big),\\
\tl(x)v &= \big( \tau(x_0)v_0 + c \sigma\tau(x_1)v_1 \big) + e \big( \tau(x_1)v_0 + \sigma\tau(x_0) v_1 \big).
\end{align*}
Thus we can write the multiplication in terms of the $K$-basis $\{1, e, f, fe\}$ as
\begin{align*}
(x+fy)(u+fv) &= \big(x_0u_0 + c \sigma(x_1)u_1 + \tau(y_0) v_0 d+ c \sigma \tau(y_1)v_1 d\big)\\
&+ e \big( x_1u_0 + \sigma(x_0)u_1 + \tau(y_1)v_0 d+ \sigma \tau(y_0)v_1d \big)\\
&+ f \big( y_0 u_0 + c \sigma(y_1)u_1 +\tau(x_0)v_0 + c \sigma\tau(x_1)v_1 \big)\\
&+ fe \big(  y_1u_0 + \sigma(y_0)u_1 +\tau(x_1)v_0 + \sigma\tau(x_0) v_1 \big).
\end{align*}
Since $d \in K$, it commutes with the elements $y_i$ and $v_i$ in the above expression.

Write $\Phi(x+fy)$ for the column vector with respect to the $K$-basis, i.e.,
\[\Phi(x+fy) = (x_0, x_1, y_0, y_1)^T \hspace{3mm} \text{ and } \hspace{3mm} \Phi(u+fv) = (u_0, u_1,v_0,v_1)^T,\]
then we can write the product as
\[\Phi((x+fy)(u+fv)) = \begin{bmatrix}
x_0u_0 + c \sigma(x_1)u_1 + d\tau(y_0) v_0 + dc \sigma \tau(y_1)v_1 \\
x_1u_0 + \sigma(x_0)u_1 + d\tau(y_1)v_0 + d\sigma \tau(y_0)v_1\\
y_0 u_0 + c \sigma(y_1)u_1 +\tau(x_0)v_0 + c \sigma\tau(x_1)v_1\\
y_1u_0 + \sigma(y_0)u_1 +\tau(x_1)v_0 + \sigma\tau(x_0) v_1
\end{bmatrix}=\]
\[\begin{bmatrix}
x_0 & c \sigma(x_1) & d\tau(y_0) & dc \sigma \tau(y_1)\\
x_1 & \sigma(x_0) & d\tau(y_1) & d\sigma \tau(y_0)\\
y_0 & c \sigma(y_1) & \tau(x_0) & c \sigma\tau(x_1)\\
y_1 & \sigma(y_0) & \tau(x_1) & \sigma\tau(x_0)
\end{bmatrix} \begin{bmatrix}
u_0\\ u_1 \\ v_0 \\ v_1
\end{bmatrix}.\]
The matrix on the left side is equal to
\[\begin{bmatrix}
\lambda(x) & d \lambda( \tl(y))\\
\lambda(y) & \lambda( \tl(x))
\end{bmatrix}. \]
Thus for $d\in L^\times$, left multiplication $L_x$ is a $K$-endomorphism and can be represented
by the above matrix with entries in $K$.
\end{example}

In the following, $A={\rm It}_R^n(D, \tau, d)$ and we assume $d\in L^\times$.
Any element $x \in A$ can be identified with a unique column vector
$\Phi(x) \in K^{mn }$
using the standard $K$-basis
$$\{1,e,\dots,e^{m-1},f,fe,\dots,fe^{m-1},\dots,f^{n-1},f^{n-1}e,\dots,f^{n-1}e^{m-1}\}.$$
For $x=x_0+fx_1+f^2x_2+\cdots+f^{n-1}x_{n-1}$, $x_0,\dots,x_{n-1} \in D$, define
\begin{equation} \label{equ:matrix_rep_A}
\Lambda(x) = \lambda(M(x)) = \begin{bmatrix}
             \lambda(x_0) & d \tau(\lambda(x_{n-1}))& d \tau^2(\lambda(x_{n-2})) & \cdots & d \tau^{n-1}(\lambda(x_1)) \\
             \lambda(x_1) & \tau(\lambda(x_0)) & d \tau^2(\lambda(x_{n-1})) & \cdots & d \tau^{n-1}(\lambda(x_{2})) \\
             \lambda(x_2) & \tau(\lambda(x_1)) & \tau^2(\lambda(x_0)) & \cdots & d \tau^{n-1}(\lambda(x_3))\\
             \vdots & \vdots & \vdots & \ddots & \vdots \\
             \lambda(x_{n-1}) & \tau(\lambda(x_{n-2})) & \tau^2(\lambda(x_{n-3})) & \cdots & \tau^{n-1}(\lambda(x_0))
             \end{bmatrix},
\end{equation}
where $\lambda(x_i)$, $x_i \in D$, is the $m \times m$ matrix with entries in $K$ representing
 left multiplication by $x_i$ in the cyclic division algebra $D$.

\begin{lemma} (B. S. Rajan and L. P. Natarajan) \label{lem:matrix_rep_A}
\\
(i) For any $x \in {\rm It}_R^n(D, \tau, d)$, $\Lambda(x)$
 is the matrix representing left multiplication by $x$ in $A$, i.e., $\Phi(xy)=\Lambda(x)\Phi(y)$ for every
$y \in A$. \\
(ii) $A$ is division if and only if $\Lambda(x)=\lambda(M(x))$ is invertible for every nonzero $x \in A$.\\
(iii) For every $x \in A$, $det(\lambda(M(x))) \in L$.
\end{lemma}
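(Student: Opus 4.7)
For part (i), the plan is a direct computation. Writing $x=\sum_{i=0}^{n-1} f^i x_i$ and $y=\sum_{j=0}^{n-1} f^j y_j$ and applying the defining multiplication of $\mathrm{It}_R^n(D,\tau,d)$, the coefficient of $f^k$ in $xy$ splits into a non-wrapping part ($i+j=k$) and a wrapping part ($i+j=k+n$), the latter picking up the scalar $d$ on the right. The crucial observation is that because $d\in L\subset K$, right multiplication by $d$ on $D$ (viewed as a right $K$-module) is just scalar multiplication, so $\Phi(zd)=d\,\Phi(z)$ for all $z\in D$, and $L_x$ is $K$-linear. Combined with $\lambda(\tl^j(z))=\tau^j(\lambda(z))$ (from the Remark in Section~2.3) and the multiplicativity of $\lambda$ on $D$, one verifies that each block row of $\Lambda(x)$ applied to $\Phi(y)$ reproduces precisely the $f^k$-component of $xy$, yielding $\Phi(xy)=\Lambda(x)\Phi(y)$.

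Part (ii) will follow at once from (i). Since $\Lambda(x)$ represents $L_x$ in the standard $K$-basis, $\Lambda(x)$ is invertible if and only if $L_x$ is bijective on the finite-dimensional $K$-space $A$. If $\Lambda(x)$ is invertible for every nonzero $x$, then $L_x$ is injective, so $xy=0$ with $x\neq 0$ forces $y=0$; hence $A$ has no zero divisors and is a division algebra by Section~2.1. The converse is immediate.

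The main work will be in (iii). The plan is to show $\tau(\det\Lambda(x))=\det\Lambda(x)$, which gives $\det\Lambda(x)\in\mathrm{Fix}(\tau)=L$. Applying $\tau$ entry-wise to $\Lambda(x)$ and using $\tau(d)=d$, the $(I,J)$-block entry becomes $d^{\epsilon_{IJ}}\tau^{J+1}(\lambda(x_{(I-J)\bmod n}))$, where $\epsilon_{IJ}=1$ if $J>I$ and $0$ otherwise. Let $Q$ denote the block cyclic permutation matrix on the $n$ blocks of size $m$. A routine index-chase will show that $Q\Lambda(x)Q^{-1}$ agrees with $\tau(\Lambda(x))$ except in the ``boundary'' positions where exactly one of $I,J$ equals $n-1$: in those entries the wrap-around discrepancy of $\epsilon$ produces a factor of $d^{\pm 1}$. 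Introducing the diagonal block matrices $E_1=\mathrm{diag}(I_m,\ldots,I_m,d^{-1}I_m)$ and $E_2=\mathrm{diag}(I_m,\ldots,I_m,dI_m)$ corrects exactly these boundary discrepancies, giving $\tau(\Lambda(x))=E_1\,Q\,\Lambda(x)\,Q^{-1}\,E_2$. Since $\det(E_1)\det(E_2)=d^{-m}\cdot d^m=1$ and $\det(Q)\det(Q^{-1})=1$, we will conclude $\tau(\det\Lambda(x))=\det\Lambda(x)$.

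The main obstacle is identifying this ``conjugation plus diagonal correction'' normal form: the bookkeeping of the wrap-around indicator $\epsilon_{IJ}$ under a cyclic shift is delicate, but writing out the cases $n=2,3$ first should make clear that the $d^{\pm 1}$ corrections are localized to the last block row and column, and this localization is precisely what allows $E_1$ and $E_2$ to do the repair with determinants that cancel.
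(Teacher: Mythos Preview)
Your proposal is correct and follows essentially the same approach as the paper. For part~(iii) the paper produces a single matrix $P$ (a block cyclic shift with one block scaled by $d$) satisfying $\Lambda(x)=P\,\tau(\Lambda(x))\,P^{-1}$; your factorisation $\tau(\Lambda(x))=E_1\,Q\,\Lambda(x)\,Q^{-1}\,E_2$ is the same identity in disguise, since $E_1=E_2^{-1}$ makes $E_1Q\,(\cdot)\,Q^{-1}E_2$ a genuine conjugation by $Q^{-1}E_2$, and $Q^{-1}E_2$ is precisely the paper's $P$.
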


\begin{proof}
(i) For any $r \in D$ with $r = r_0 + er_1 + \dots +er^{m-1}$, where $r_0,\dots,r_{m-1} \in K$,
define $\phi(r)=\left[r_0, r_1,\dots,r_{m-1} \right]^T$. Let $x=\sum_{i=0}^{n-1}f^ix_i$ and
$y=\sum_{j=0}^{n-1}f^jy_j$ be elements of $A$, with $x_i, y_i \in D$. For any $x_i$ and $y_i$,
the multiplication in $D$ is given by $\phi(x_iy_i) = \lambda(x_i)\phi(y_i)$.
Moreover, since $d \in L$ we see that $\phi(d) = \left[d,0,\ldots, 0 \right]$, and therefore
\[\phi(y_i d) = \lambda(y_i)\phi(d) =  \phi(y_i)d = d \phi(y_i).\]
Now it is straightforward to see that the matrix multiplication $\Lambda(x) \Phi(y)$ does indeed represent the
multiplication in $A$.
\\
(ii)  $A$ is division if and only if $xy \neq 0$ for every nonzero $x,y \in A$~\cite{Sch2},
 i.e., if and only if $\Lambda(x)\Phi(y) \neq 0$, or equivalently, if and only if $\Lambda(x)$ is invertible for every
 nonzero $x \in A$.
\\
(iii) It enough to show that $\det(\Lambda(x))=\tau(\det(\Lambda(x))) = \det(\tau(\Lambda(x)))$, where
\begin{equation*}
\tau(\Lambda(x)) = \begin{bmatrix}
             \tau(\lambda(x_0)) & d \tau^2(\lambda(x_{n-1}))& \cdots & d \tau^{n-1}(\lambda(x_{2})) & d\lambda(x_1) \\
             \tau(\lambda(x_1)) & \tau^2(\lambda(x_0)) & \cdots & d \tau^{n-1}(\lambda(x_{3})) & d\lambda(x_{2}) \\
             \tau(\lambda(x_2)) & \tau^2(\lambda(x_1)) & \cdots & \tau^{n-1}(\lambda(x_4)) & d\lambda(x_3)\\
             \vdots & \vdots & \vdots & \ddots & \vdots \\
             \tau(\lambda(x_{n-1})) & \tau^2(\lambda(x_{n-2})) & \cdots & \tau^{n-1}(\lambda(x_{1})) & \lambda(x_0)
             \end{bmatrix}.
\end{equation*}
It follows that $\Lambda(x)=P\tau(\Lambda(x))P^{-1}$, with
\[P=\left[ \begin{array}{ccccc}
{0} & {0} & \cdots & {0} & d {I}_m\\
{I}_m & {0} & \cdots & {0} &{0} \\
{0} & {I}_m & \cdots & {0} & {0} \\
\vdots & \vdots & \ddots & \vdots & \vdots\\
{0} & {0} & \cdots & {I}_m & {0}
\end{array}\right] \textrm{ and }
P^{-1}=\left[ \begin{array}{ccccc}
{0} & {I}_m & {0} & \cdots & {0}\\
{0} & {0} & {I}_m & \cdots &{0} \\
\vdots & \vdots & \ddots & \vdots & \vdots\\
{0} & {0} & {0} & \cdots & {I}_m\\
d^{-1}{I}_m & {0} & {0} & \cdots & {0}
\end{array}\right],\]
where $I_m$ is the $m \times m$ identity matrix and $0$ is the $m \times m$ zero matrix which proves the assertion.
\end{proof}

\section{How to design fast-decodable Space-Time Block Codes using ${\rm It}_R^n(D, \tau, d)$}

\subsection{} \label{5.1}
To construct fully diverse space-time block codes for $mn$ transmit antennas using ${\rm It}_R^n(D, \tau, d)$ (or
${\rm It}^n(D, \tau, d)$ in the next Section), let
$L$ be either $\mathbb{Q}(i)$ or $\mathbb{Q}(\omega)$, $\omega=e^{2\pi i /3}$,
and $D = (K/F, \sigma, c)$ a cyclic division algebra of degree $m$ over a number field $F\not=L$,
$c \in F\cap L$, and
where $K$ is a cyclic extension of $L$ of degree $n$ with Galois group generated by $\tau$.
We assume that $\sigma$ and $\tau$ commute. For $x\in D$, let $\lambda(x)$ be the $m \times m$ matrix with entries in $K$
given by the left regular representation in $D$.

Each entry of $\lambda(x)$ can be viewed as a linear
combination of $n$ independent elements of $L$. As such we express each entry of these as a linear combination
of some chosen $L$-basis $\{\theta_1, \theta_2, \ldots, \theta_n \mid \theta_i \in \mathcal{O}_K\}$ over $\mathcal{O}_L$.
Thus an entry $\lambda(x)$ has the form
\begin{equation}\label{lambdax}
\lambda(x) = \left[ \begin{array}{cccc}
\sum_{i=1}^n s_i \theta_i & c \sigma(\sum_{i=1}^n s_{i+nm-n} \theta_i)  & \dots  & c \sigma^{m-1}(\sum_{i=1}^n s_{i+n} \theta_i) \\
\sum_{i=1}^n s_{i+n} \theta_i & \sigma(\sum_{i=1}^n s_i \theta_i)  & \dots  & c \sigma^{m-1}(\sum_{i=1}^n s_{i+2n} \theta_i )\\
\vdots  &   \vdots      &   \ddots   &  \vdots          \\
\sum_{i=1}^n s_{i+nm-n} \theta_i & \sigma(\sum_{i=1}^n s_{i+nm-2n} \theta_i) & \dots  &  \sigma^{m-1}(\sum_{i=1}^n s_i \theta_i) \end{array}
 \right].
\end{equation}
 The elements $s_i, 1 \leq i \leq mn$, are the complex information symbols with values from QAM ($\mathbb{Z}(i)$) or HEX
  ($\mathbb{Z}(\omega)$) constellations.

\subsection{} \label{5.2}
We assume that $f(t)=t^n-d\in D[t;\tl^{-1}]$, $d\in L^\times$, is irreducible.
Then $A={\rm It}_R^n(D, \tau, d)$ is division and each codeword in $\mathcal{C}$ is a matrix of the form given in
(\ref{equ:matrix_rep_A}) and these are
 invertible $mn\times mn$ matrices with entries in $K$.

Contrary to \cite{R13}, we are interested in high data rate, i.e. we use the $mn^2$ degrees of freedom of the channel to transmit
 $mn^2$ complex information symbols per codeword.
If $mn$ channels are
used the space-time block code $\mathcal{C}$ consisting of matrices $S$ of the form (\ref{equ:matrix_rep_A}) with entries as in (\ref{lambdax})
 has a rate of $n$ complex symbols per channel use, which is maximal for $n$ receive antennas.

\begin{proposition} \label{prop:groupdeco}
If the subset of codewords in $\mathcal{C}$ made up of the diagonal block matrix
$$S(\lambda(x_0))=diag [ \lambda(x_0),  \tau(\lambda(x_0))\dots, \tau^{n-1}(\lambda(x_0))]$$
 is $l$-group decodable, then $\mathcal{C}$ has
ML-decoding complexity  $\mathcal{O}(M^{mn^2-mn(l-1)/l)})$ and is fast-decodable.
\end{proposition}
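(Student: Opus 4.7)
The plan is to exploit the block structure of $\Lambda(x)$ in (\ref{equ:matrix_rep_A}) via a conditional sphere-decoding argument. Partition the $mn^2$ complex information symbols per codeword into two sets: the $mn$ symbols $\Sigma_0$ encoding $x_0$, which by inspection of (\ref{equ:matrix_rep_A}) appear only in the diagonal blocks $\tau^j(\lambda(x_0))$, and the remaining $mn(n-1)$ symbols $\Sigma_1$ encoding $x_1,\dots,x_{n-1}$, which appear only in the off-diagonal blocks. Equivalently, the $\mathbb{R}$-linearly independent basis matrices $B_g$ with $g\in\Sigma_0$ are block-diagonal, and those with $g\in\Sigma_1$ are block-off-diagonal.

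Next I would invoke standard conditional sphere decoding applied to the decomposition $\Sigma_0\sqcup\Sigma_1$: enumerate the $M^{mn(n-1)}$ hypotheses for the symbols in $\Sigma_1$, subtract their contribution from the received signal, and solve the residual ML problem for the $mn$ symbols of $\Sigma_0$. Because $\Sigma_1$ is fixed, this residual problem is exactly ML decoding of the block-diagonal subcode consisting of matrices $S(\lambda(x_0))$. The $l$-group-decodability hypothesis then applies: the pairwise quantities $M_{g,k}=\|B_gB_k^*+B_kB_g^*\|$ for $g,k\in\Sigma_0$ vanish across the partition $\Gamma_1\sqcup\dots\sqcup\Gamma_l$ with $\max_i|\Gamma_i|=mn/l$, so the sphere decoder processes these $l$ groups in parallel with complexity $\mathcal{O}(l\cdot M^{mn/l})=\mathcal{O}(M^{mn/l})$.

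Multiplying the outer enumeration cost and the inner parallel decoding cost yields
$$\mathcal{O}\bigl(M^{mn(n-1)}\cdot M^{mn/l}\bigr)=\mathcal{O}\bigl(M^{mn^2-mn(l-1)/l}\bigr),$$
and since $mn^2-mn(l-1)/l<mn^2$ for every $l\geq 2$, this shows $\mathcal{C}$ is fast-decodable in the sense of Section~\ref{sec:def}.

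The one delicate point is justifying that the hypothesis, stated for the sub-codebook $\{S(\lambda(x_0))\}$, transfers correctly to the conditioned decoder for the full code $\mathcal{C}$. Concretely, one must verify that conditioning on $\Sigma_1$ does not couple previously-uncoupled pairs in $\Sigma_0$ in the QR-based metric used by the sphere decoder. This reduces to the observation that, because the basis matrices indexed by $\Sigma_0$ are block-diagonal while those indexed by $\Sigma_1$ are block-off-diagonal, the Hermitian products $B_gB_k^*+B_kB_g^*$ for $g,k\in\Sigma_0$ are intrinsic to the block-diagonal matrices and are unaffected by the translation induced by fixing the $\Sigma_1$-symbols, so the $l$-group partition of $\Sigma_0$ survives the conditioning step and the parallel complexity count is valid.
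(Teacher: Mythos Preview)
Your proposal is correct and follows essentially the same conditional-decoding argument as the paper: decompose each codeword as $S=S(\lambda(x_0))+\cdots+S(\lambda(x_{n-1}))$, enumerate the $mn(n-1)$ symbols in $x_1,\dots,x_{n-1}$ exhaustively, and then exploit the $l$-group decodability of the residual block-diagonal subcode to decode $x_0$ at cost $\mathcal{O}(M^{mn/l})$, yielding the stated total. Your treatment is in fact more careful than the paper's, which simply asserts the multiplicative complexity count without addressing whether conditioning on the off-diagonal symbols could disturb the $l$-group structure of the diagonal block; your final paragraph supplies that missing justification.
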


\begin{proof}
To analyze ML-decoding complexity, we have to minimize the ML-complexity metric
$$||Y-\sqrt{\rho}HS||^2$$
over all codewords $S\in \mathcal{C}$.
Every $S\in \mathcal{C}$ can be written as
$$S=S(\lambda(x_0))+S(\lambda(x_1))+\dots+S(\lambda(x_{n-1}))$$
 with
$S(\lambda(x_0))=diag[\lambda(x_0),\tau(\lambda(x_0)),\dots,\tau^{n-1}(\lambda(x_0))]$ and $S(\lambda(x_j))$ being the matrix
obtained by
putting $\lambda(x_j)=0$, for all $j\not=i$  in (\ref{equ:matrix_rep_A}). Each $S(\lambda(x_i))$ contains $nm$ complex information symbols.
Since $S(\lambda(x_0))$ is $l$-group decodable by assumption, we need $\mathcal{O}(M^{nm/l})$
computations to compute ${\rm min}_{S(\lambda(x_0)))}\{||Y-\sqrt{\rho}HS||^2\}$.
So the $ML$-decoding complexity of $\mathcal{C}$ is $\mathcal{O}(M^{(n-1)(nm)+nm/l})=\mathcal{O}(M^{mn^2-mn(l-1)/l)})$
\end{proof}

\begin{corollary} \label{cor:ML}
If $D={\rm Cay}(K/F,-1)$ is a subalgebra of Hamilton's quaternion algebra $\mathbb{H}$ and
$d\in L\setminus F$, then the corresponding code $\mathcal{C}$ in (\ref{equ:matrix_rep_A})  has
decoding complexity
$$\mathcal{O}(M^{2n^2-3n/2})$$
 if the $s_i$ take values from $M$-QAM and decoding complexity
$$\mathcal{O}(M^{2n^2-n})$$
if the $s_i$ take values from $M$-HEX.
\end{corollary}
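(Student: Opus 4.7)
The plan is to apply Proposition \ref{prop:groupdeco} with $m=2$, where the exponent reads $2n^2-2n(l-1)/l$; the two claimed complexities correspond to $l=4$ (QAM) and $l=2$ (HEX). So the problem reduces to showing that the block-diagonal matrix $S(\lambda(x_0))$ is $4$-group decodable when $s_k\in\mathbb{Z}[i]$ and $2$-group decodable when $s_k\in\mathbb{Z}[\omega]$.

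Next, I would unpack the hypothesis $D=\mathrm{Cay}(K/F,-1)\subset\mathbb{H}$. Since Hamilton's quaternions are real, this forces $F\subset\mathbb{R}$, $K=F(i)$, and $\sigma$ to be the restriction of complex conjugation to $K$. Because $\sigma,\tau$ commute and $\sigma$ fixes $F$ pointwise, one has $\tau(F)\subset\mathrm{Fix}(\sigma)=F$; combined with $[F:F_0]=n=[K:L]$, this shows that $\tau|_F$ generates $\mathrm{Gal}(F/F_0)$, so every $\tau^j$ sends real elements to real elements. I would then choose an $\mathcal{O}_L$-basis $\{\theta_1,\ldots,\theta_n\}\subset\mathcal{O}_F\subset\mathbb{R}$ of $\mathcal{O}_K$, and write $x_0=x_{00}+ex_{01}$ with $x_{00}=\sum s_k\theta_k$ and $x_{01}=\sum s_{n+k}\theta_k$, so that $\lambda(x_0)$ is a $2\times 2$ Alamouti matrix and each diagonal block of $S(\lambda(x_0))$ is the same Alamouti form, its entries scaled by the real numbers $\tau^j(\theta_k)$.

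Splitting each complex $s_k$ into its two real components ($a_k+ib_k$ for QAM, $a_k+\omega b_k$ for HEX) partitions the $4n$ real information symbols into four types, according as the symbol is a "real" or "imaginary" coefficient and sits in $x_{00}$ or in $x_{01}$. Within the $j$th block each basis matrix equals the real scalar $\tau^j(\theta_k)$ times a fixed $2\times 2$ Alamouti base matrix $E_g$ ($g\in\{1,2,3,4\}$), and hence
\[
B_g^{(j)}(B_h^{(j)})^{\ast}+B_h^{(j)}(B_g^{(j)})^{\ast}=\tau^j(\theta_k)\tau^j(\theta_{k'})\bigl(E_g E_h^{\ast}+E_h E_g^{\ast}\bigr).
\]
Cross-type orthogonality in the full block-diagonal basis matrices of $S(\lambda(x_0))$ therefore reduces, block by block, to the six base identities $E_g E_h^{\ast}+E_h E_g^{\ast}=0$. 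In the QAM case the base matrices are essentially $I$ and the three $\pm i\sigma_x,\pm i\sigma_y,\pm i\sigma_z$; the Pauli anticommutation relations make all six cross-identities vanish, giving $l=4$. In the HEX case the base matrices have the same pattern but with $\omega$ in place of $i$, and the same computation shows that the four identities pairing $\{E_1,E_2\}$ (the $x_{00}$-types) with $\{E_3,E_4\}$ (the $x_{01}$-types) still vanish, while the two internal ones both evaluate to $(\omega+\omega^2)I=-I\neq 0$. Thus the finer split into four types fails and only the coarser bipartition $l=2$ survives.

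The main obstacle is the book-keeping in the previous step: verifying the six base-matrix identities and checking that in the HEX case no finer partition is possible, so that the exponent $l=2$ is tight. Once the $2\times 2$ identities are established, the reality of the scalars $\tau^j(\theta_k)$ assembles the block-wise vanishing into the full-matrix identity $M_{g,k}=0$ required by $l$-group decodability, and substituting $l=4$ or $l=2$ into the expression $\mathcal{O}(M^{mn^2-mn(l-1)/l})$ from Proposition \ref{prop:groupdeco} with $m=2$ yields $\mathcal{O}(M^{2n^2-3n/2})$ and $\mathcal{O}(M^{2n^2-n})$, as claimed.
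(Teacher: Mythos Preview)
Your approach is essentially the paper's: reduce to Proposition~\ref{prop:groupdeco} with $m=2$, then invoke the $4$-group (QAM) resp.\ $2$-group (HEX) decodability of the block-diagonal Alamouti code $S(\lambda(x_0))$ to obtain the exponents $2n^2-3n/2$ and $2n^2-n$; the paper simply cites \cite[Proposition~7~ff.]{R13} for the group-decodability, whereas you supply a self-contained verification via the Pauli/Alamouti identities. One small slip: $K=F(i)$ is only the QAM case---for HEX one has $L=\mathbb{Q}(\omega)$ and $K=F(\omega)$---but your argument only uses that $\sigma$ is complex conjugation on $K$ (which $D\subset\mathbb{H}$ forces) and hence $F\subset\mathbb{R}$, so this does not affect anything.
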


\begin{proof}
If $D={\rm Cay}(K/F,-1)$ is a quaternion division algebra which is a subalgebra of $\mathbb{H}$,
$\sigma$ commutes with complex conjugation, and a code
consisting of the block diagonal matrices $S(\lambda(x_0))$ above
 with entries as in (\ref{lambdax})
is four-group decodable if we the $s_i$ take values from $M$-QAM
and two group-decodable if  the $s_i$ take values from $M$-HEX. Consequently, $\mathcal{C}$ has
decoding complexity $\mathcal{O}(M^{(n-1)(2n)+n/2})=\mathcal{O}(M^{2n^2-3n/2})$ if the $s_i$ take values from $M$-QAM and
 decoding
complexity $\mathcal{O}(M^{(n-1)(2n)+n})=\mathcal{O}(M^{2n^2-n})$
if the $s_i$ take values from $M$-HEX \cite[Proposition 7 ff.]{R13}.
\end{proof}

\subsection{Specific code examples} \label{sec:Specific_Code_Examples}

 The Alamouti code has the best coding gain among known $2\times 1$
 codes of rate one, hence in our examples we will use $D=(-1,-1)_F$.

Our three code examples have high data rate and use the same algebras and automorphisms as the examples of \cite{R13}:
 Since the Alamouti code has the lowest ML-decoding complexity among the STBCs obtained from associative
division algebras, the choice of $D$ as a a subalgebra of Hamilton's quaternions in each example guarantees best possible
fast decodability.
 The choice of  $L$ and $K$ in \cite{R13} seems optimal to us as well since the extensions
are related to the corresponding perfect STBCs in the respective dimensions.
We start  building two codes using $A={\rm It}_R^n(D,\tau,d)$.

\subsection{Example of $6 \times 3$ MIMO System} \label{subsec:I}
Take the setup of~\cite[Section~IV.C.]{R13}.
Let $\omega=\frac{-1+\sqrt{3}i}{2}$ be a primitive third root of unity, $\theta = \zeta_7 + \zeta_7^{-1} = 2
\cos(\frac{2 \pi}{7})$, where $\zeta_7$ is a primitive $7^{th}$ root of
unity and let $F = \mathbb{Q}(\theta)$. Let $K = F(\omega) =\mathbb{Q}(\omega, \theta)$ and take $D = (K/F, \sigma, -1)$
as the quaternion division algebra. Note that $\sigma:i \mapsto -i$ and
therefore $\sigma(\omega) = \omega^2$.
 Let $L =\mathbb{Q}(\omega)$, so that $K/L$ is a cubic cyclic field extension
whose Galois group is generated by the automorphism $\tau: \zeta_7 +\zeta_7^{-1} \mapsto \zeta_7^2 + \zeta_7^{-2}$.
 We do not need to restrict our considerations to a sparse code as done in \cite{R13}  in order to get a fully
diverse code:

Since $\omega\not\in N_{K/L}(K^\times)$,
 $ {\rm It}_R^3(D, \tau, \omega)$ is a division algebra by Theorem \ref{thm:degree3}.
Hence the code consisting of all matrices of the
form
\[\left[ \begin{array}{ccc}
\lambda(x) & \omega\lambda(\tl(z)) & \omega \lambda(\tl^2(y))\\
\lambda(y) & \lambda(\tl(x)) & \omega\lambda(\tl^2(z))\\
\lambda(z) & \lambda(\tl(y)) & \lambda(\tl^2(x))
\end{array}  \right],\]
with $x,y,z$ not all zero, is fully diverse.
 Write $x = x_0 + ex_1$, $y = y_0 + ey_1$, $z = z_0 + ez_1$, where $x_i, y_i,z_i \in K$,
then its $6\times 6$ matrix is given by
\[S=\left[ \begin{array}{cccccc}
x_0 & -\sigma(x_1) &  \omega\tl(z_0) & -\omega\tl \sigma(z_1) & \omega \tl^2(y_0) & -\omega \tl^2 \sigma(y_1)\\
x_1 & \sigma(x_0) &  \omega\tl(z_1) &  \omega\tl \sigma(z_0) & \omega \tl^2 \sigma(y_1) & \omega \tl^2\sigma(y_0)\\
y_0 & -\sigma(y_1) & \tl(x_0) & -\tl \sigma(x_1) & \omega \tl^2(z_0) & -\omega \tl^2\sigma(z_0) \\
y_1 & \sigma(y_0) & \tl(x_1) & \tl\sigma(x_0) & \omega \tl^2(z_1) & \omega \tl^2\sigma(z_1) \\
z_0 & \sigma(z_0) & \tl(y_0) & -\tl\sigma(y_1) & \tl^2(x_0) & -\tl^2\sigma(x_1) \\
z_1 & -\sigma(z_1) & \tl(y_1) & \tl \sigma(y_0) & \tl^2(x_1) & \tl^2 \sigma(x_0)
\end{array} \right].\]
 With the encoding from \ref{5.1}, we  encode 18
complex information symbols with each codeword $S$. The code has rate $3$ for 6 transmit and 3 receive antennas, i.e.
maximal rate.

We use $M$-HEX complex constellations  and the notation from \ref{5.1} (i.e., $s_j\in\mathbb{Z}[\omega]$):
 choose $\{\theta_1,\theta_2,\theta_3\}$ to be a basis of the principal ideal in $\mathcal{O}_K$ generated
by $\theta_1$ with $\theta_1=1+\omega+\theta$, $\theta_2=-1-2\omega+\omega\theta^2$, $\theta_3=(-1-2\omega)+
(1+\omega)\theta+(1+\omega)\theta^2$. Since all entries of the code matrix $S$ lie in $\mathcal{O}_K$, here
$\det(S)\in\mathcal{O}_L=\mathbb{Z}[\omega]$
by Lemma \ref{lem:matrix_rep_A} (iii). Then
the determinant of any nonzero codeword $S$ is an element in $\mathbb{Z}[\omega]$ and, being fully diverse, the code has NVD
 which means the code is DMT-optimal \cite{SR3}.
 Its minimum determinant (of the unnormalized code) is thus at least 1.
By a similar  argument as given in \cite[C.]{R13}, using a normalization factor of $1/\sqrt{28E}$, the normalized minimum
 determinant is
 $$49(\frac{2}{\sqrt{28E}})^{18}=1/7^7 E^9.$$
Each codeword
$S(\lambda(x_0))=diag[\lambda(x_0),\tau(\lambda(x_0)),\tau^2(\lambda(x_0))]$
is  2-group decodable \cite[Proposition 7]{R13}. $S(\lambda(x_0))$, $S(\lambda(x_1))$ and $S(\lambda(x_2))$
contain each 6 complex information symbols. By Proposition \ref{prop:groupdeco}, the ML-decoding complexity of the code is
at most $\mathcal{O}(M^{15})$ and the code is fast-decodable. We are no experts in coding theory but assume that hard-limiting the code as done in
\cite{R13} might reduce the ML-complexity further, by a factor of $\sqrt{M}$, to at most $\mathcal{O}(M^{14.5})$.

In comparison, the fully diverse rate-3 VHO-code for 6 transmit and 3 receive antennas presented in \cite[X.C]{VHO} has a complexity of at most
$\mathcal{O}(4 M^{27})$. The fast decodable code rate-3 code for 6 transmit and 3 receive
antennas proposed in \cite[V.B]{MO13} is not fully diverse and has decoding complexity
$\mathcal{O}(M^{30})$.

\subsection{An $8 \times 4$ MIMO System}\label{subsec:5.4}

Let
\begin{enumerate}
\item $\theta = \zeta_{15} + \zeta_{15}^{-1} = 2\cos \frac{2 \pi}{15}$ where $\zeta_{15}$ is a primitive $15^{th}$ root
 of unity and $F = \mathbb{Q}(\theta)$;
\item $K = F(i)$ and $D = (K/F, \sigma, -1)$ which is a subalgebra of Hamilton's quaternions;
\item $L = \mathbb{Q}(i)$ so that $K/L$ is a cyclic field extension of degree 4 with Galois group generated by the
automorphism $\tau: \zeta_{15} + \zeta_{15}^{-1} \mapsto \zeta_{15}^2 + \zeta_{15}^{-2}$;
\item $A = {\rm It}_R^4(D, \tau, i)$.
\end{enumerate}
The associated code is
\[\mathcal{C}_{8 \times 4} = \left \lbrace \left[ \begin{array}{cccc}
\lambda(x_0) & i\lambda(\tl(x_3)) & i\lambda(\tl^2(x_2)) & i \lambda(\tl^3(x_1))\\
\lambda(x_1) & \lambda(\tl(x_0)) & i\lambda(\tl^2(x_3)) & i\lambda(\tl^3(x_2)) \\
\lambda(x_2) & \lambda(\tl(x_1)) & \lambda(\tl^2(x_0)) & i\lambda(\tl^3(x_3))\\
\lambda(x_3) & \lambda(\tl(x_2)) & \lambda(\tl^2(x_1)) & \lambda(\tl^3(x_0))
\end{array}\right] \right\rbrace.\]
If $x_i = a_i + eb_i$ for $a_i, b_i \in K$, then
\[\lambda(x) = \left[ \begin{array}{cc}
a_i & -\sigma(b_i)\\
b_i & \sigma(a_i)
\end{array}\right].\]
 With the encoding from \ref{5.1}, we  encode 32
complex information symbols with each codeword $S$. The code has rate $4$ for 8 transmit and 4 receive antennas
which is maximal.
Assuming $s_j\in \mathbb{Z}[i]$ are $M$-QAM-symbols and
$\{\theta_1,\theta_2,\theta_3, \theta_4\}$ is a basis of the principal ideal in $\mathcal{O}_K$ generated
by $\theta_1=\alpha=1-3i+i\theta^2$ with $\theta_2=\alpha\theta$, $\theta_3=\alpha\theta(-3+\theta^2)$, $\theta_4=
\alpha(-1-3\theta+\theta^2+\theta^3)$. Since all entries of a code matrix $S\in\mathcal{C}_{8\times 4}$ lie
in $\mathcal{O}_K$,
$\det(S)\in\mathcal{O}_L=\mathbb{Z}[i]$ by Lemma \ref{lem:matrix_rep_A} (iii).
 By Proposition \ref{prop:groupdeco} or Corollary \ref{cor:ML}, the ML-decoding complexity of the code is
at most $\mathcal{O}(M^{26})$ and the code is fast-decodable.  Hard-limiting the code as done in
\cite{R13} might reduce the ML-complexity further to $\mathcal{O}(M^{25.5})$.

We have $i \neq z \tl(z) \tl^2(z) \tl^3(z)$
for any $z \in D$ \cite{R13}. We are not able to check whether the code is fully diverse, since
we cannot exclude the
possibility that $F(t)=t^4-i$ decomposes into two irreducible polynomials in $D[t;\widetilde{\tau}^{-1}]$,
we are only able to exclude some obvious cases.

\section{How to design fast-decodable fully diverse  MIMO systems using ${\rm It}^n(D, \tau, d)$ with $d\in F\setminus F_0$ and $n$ prime}

 We assume the set-up from Section \ref{5.1} with the additional condition that
 $n$ is prime and in case $n\not=2,3$, additionally that $F_0$ contains a primitive $n$th root of unity.
In order to construct fully diverse codes, we do not need to restrict our considerations to sparse
codes as done in \cite{SP14}:

\subsection{} \label{6.1}

We assume that
$d\in F\setminus F_0$, such that $d^m\not\in F_0$.
Then $A={\rm It}^n(D, \tau, d)$ is division and each codeword in $\mathcal{C}$ is a matrix of the form given in
 (\ref{equ:main}), which becomes
 (\ref{equ:matrix_rep_A}), as $d\in F$, hence $\lambda(d)={\rm diag}[d,\dots,d]$. These are
 invertible $mn\times mn$ matrices with entries in $K$. $\mathcal{C}$ is fully diverse by Proposition \ref{prop:6}.

\begin{remark}
Suppose that $n$ is an odd prime. If $n\not=3$,  additionally assume that $F_0$ contains a primitive $n$th root of unity.
 Then for $d\in F=F_0(\alpha)$, $d=d_0+d_1\alpha\dots+ d_{n-1}\alpha^{n-1}$ ($d_i\in F_0$),
 it is easy to calculate examples with $d^m\not\in F_0$, e.g. if $n>2$ is a prime and $m=2$,
 any $d=d_0+d_1\alpha$, $d_1\not=0$ works.
\end{remark}

Contrary to \cite{SP14}, we now use the $mn^2$ degrees of freedom of the
channel to transmit $mn^2$ complex information symbols per codeword.
If $mn$ channels are
used, the space-time block code $\mathcal{C}$ consisting of matrices $S$ of the form (\ref{equ:matrix_rep_A})
 with entries as in (\ref{lambdax})
 has a rate of $n$ complex symbols per channel use, which is maximal for $n$ receive antennas.
By Proposition \ref{prop:groupdeco}, which holds analogously,
if the subset of codewords in $\mathcal{C}$ made up of the diagonal block matrix
$$S(\lambda(x_0))=diag [ \lambda(x_0),  \tau(\lambda(x_0))\dots, \tau^{n-1}(\lambda(x_0))]$$
 is $l$-group decodable, then $\mathcal{C}$ has
ML-decoding complexity  $\mathcal{O}(M^{mn^2-mn(l-1)/l)})$ and is fast-decodable.

Suppose that $D={\rm Cay}(K/F,-1)$ is a subalgebra of $\mathbb{H}$.
By Corollary \ref{cor:ML}, which holds analogously, the corresponding code
$\mathcal{C}$ in (\ref{equ:matrix_rep_A})  has
decoding complexity
$$\mathcal{O}(M^{2n^2-3n/2})$$
 if the $s_i$ take values from $M$-QAM and decoding complexity
$$\mathcal{O}(M^{2n^2-n})$$
if the $s_i$ take values from $M$-HEX.
It is fully diverse for all $d\in F\setminus F_0$, such that $d^2\not\in F_0$.

\subsection{Example of $6 \times 3$ MIMO System}
Take the setup of Section \ref{subsec:I} but use  ${\rm It}^3(D, \tau, d)$.
For all $d\in \mathbb{Q}(\theta)\setminus \mathbb{Q}$ with $d^2\not\in \mathbb{Q}$,
${\rm It}^3(D,\tau,d)$ is a division algebra (Proposition \ref{prop:6}). For instance,
${\rm It}^3(D,\tau,\theta)$
is a division algebra and the code $\mathcal{C}$ given by the matrices
\[S=\left[ \begin{array}{cccccc}
x_0 & -\sigma(x_1) &  \theta\tl(z_0) & -\theta\tl \sigma(z_1) & \theta \tl^2(y_0) & -\theta \tl^2 \sigma(y_1)\\
x_1 & \sigma(x_0) &  \theta\tl(z_1) &  \theta\tl \sigma(z_0) & \theta \tl^2 \sigma(y_1) & \theta \tl^2\sigma(y_0)\\
y_0 & -\sigma(y_1) & \tl(x_0) & -\tl \sigma(x_1) & \theta \tl^2(z_0) & -\theta \tl^2\sigma(z_0) \\
y_1 & \sigma(y_0) & \tl(x_1) & \tl\sigma(x_0) & \theta \tl^2(z_1) & \theta \tl^2\sigma(z_1) \\
z_0 & \sigma(z_0) & \tl(y_0) & -\tl\sigma(y_1) & \tl^2(x_0) & -\tl^2\sigma(x_1) \\
z_1 & -\sigma(z_1) & \tl(y_1) & \tl \sigma(y_0) & \tl^2(x_1) & \tl^2 \sigma(x_0)
\end{array} \right]\]
where $x_i, y_i,z_i \in K$, is fully diverse. Using the encoding and notation from Section \ref{5.1},
 for 6 transmit and 3 receive antennas it has
maximal rate 3.

We use $M$-HEX complex constellations (i.e., $s_j\in\mathbb{Z}[\omega]$):
 choose $\{\theta_1,\theta_2,\theta_3\}$ to be a basis of the principal ideal in $\mathcal{O}_K$ generated
by $\theta_1$ with $\theta_1=1+\omega+\theta$, $\theta_2=-1-2\omega+\omega\theta^2$, $\theta_3=(-1-2\omega)+
(1+\omega)\theta+(1+\omega)\theta^2$. Since all entries of the code matrix $S$ lie in $\mathcal{O}_K$,
the determinant of any nonzero codeword $S$ is an element in $\mathcal{O}_F$ by \cite[Theorem 2]{SP14}.

Each codeword
$S(\lambda(x_0))=diag[\lambda(x_0),\tau(\lambda(x_0)),\tau^2(\lambda(x_0))]$
is  2-group decodable \cite[Proposition 7]{R13} and $S(\lambda(x_0))$, $S(\lambda(x_1))$ and $S(\lambda(x_2))$
contain each 6 complex information symbols.
Therefore the ML-decoding complexity of the code is
at most $\mathcal{O}(M^{15})$ and the code is fast-decodable. Again, hard-limiting
 might reduce the ML-complexity to at most $\mathcal{O}(M^{14.5})$.
 The code does not have NVD which would suffice for it to be DMT-optimal. However, NVD seems not always  necessary
 for DMT-optimality to hold.

\section{Conclusion}

One current goal in space-time block coding is to construct space-time block codes which  are fast-decodable
 in the sense of \cite{JR}, \cite{NR}, \cite{LS}
also when there are less receive than transmit antennas, support high data rates
and have the potential to be systematically  built for given numbers of transmit and receive antennas.

After obtaining conditions for the codes associated to the algebras ${\rm It}^n(D, \tau, d)$, $d\in F^\times$,
and ${\rm It}_R^n(D, \tau, d)$, $d\in L\setminus F$, to be fully diverse, we construct
fast decodable fully diverse codes for $mn$ transmit and $n$ receive antennas with maximum rate $n$ out of fast decodable
codes associated with  central simple division algebras of degree $m$, for any choice of $m$ and $n$.
We thus answer the question  for conditions to construct higher rare codes \cite[VII.]{R13}.

The conditions were simplified in the special case of a quaternion algebra
$D$ and an extension $K/L$ with $[K:L]=3$ in Theorem \ref{thm:degree3}, yielding an easy way to construct fully diverse
 rate-3 codes for 6 transmit and 3 receive antennas using ${\rm It}_R^3(D, \tau, d)$, $d\in L\setminus F$.
 They were further simplified for prime $n$ if $n=3$ or if $F_0$ contains a primitive $n$th root of unity
(Proposition \ref{prop:6}),
using ${\rm It}^n(D, \tau, d)$, $d\in F\setminus F_0$ for the code construction.

Since we are dealing with nonassociative algebras and skew polynomial rings, there is no well developed theory of
 valuations or similar yet which one could use to study the algebras over number fields.
 This would go beyond the scope of this paper and will be addressed in \cite{B14}.

\section{Acknowledgments}

We would like to thank the referees for their comments and suggestions which greatly helped  to improve the paper,
and B. Sundar Rajan (Senior Member, IEEE) and L. P. Natarajan for allowing us to include
Lemma \ref{lem:matrix_rep_A}.


\end{document}